\newtheorem{de}{Definition}[section]
\newtheorem{lem}[de]{Lemma}
\newtheorem{theo}[de]{Theorem}
\newtheorem{prop}[de]{Proposition}
\newtheorem{cor}[de]{Corollary}
\newtheorem{claim}[de]{Claim}
\newtheorem{question}[de]{Question}
\renewenvironment{proof}
{\textbf{Proof.}}
{\qed\newline}
\newcommand{\Ps}{\mathcal{P}}
\newcommand{\Ks}{\mathcal{K}}
\newcommand{\Ss}{\mathcal{S}}
\newcommand{\PS}[3]{\ensuremath{\mathbf{PS}(#1,#2,#3)}}
\newcommand{\NPS}[4]{\ensuremath{{#4}\text{-}\mathbf{NPS}(#1,#2,#3)}}
\DeclareMathOperator{\expect}{\mathbb{E}}
\newcommand{\N}{\mathbb{N}}
\newcommand{\R}{\mathbb{R}}
\newcommand{\F}{\mathbb{F}}
\newcommand{\ra}{\rightarrow}
\newcommand{\Ra}{\Rightarrow}
\title{Almost-perfect secret sharing}
\author{
Tarik Kaced\footnote{%
LIF, Univ. Aix--Marseille.
Email:~\tt{tarik.kaced@lif.univ-mrs.fr}
}
}
\begin{document}
\maketitle

\begin{abstract}
Splitting a secret $s$ between several participants, we generate (for each
value of $s$) shares for all participants. The goal: authorized groups of
participants should be able to reconstruct the secret but forbidden ones get no
information about it. In this paper we introduce several notions of
\emph{non-perfect} secret sharing, where some small information leak is
permitted. We study its relation to the Kolmogorov complexity version of secret
sharing (establishing some connection in both directions) and the effects of
changing the secret size (showing that we can decrease the size of the secret
and the information leak at the same time).
\end{abstract}


\section{Secret sharing: a reminder}

Assume that we want to share a secret -- say, a bit string $x$ of length $n$ --
between two people in such a way that they can reconstruct it together but none
of them can do this in isolation. This is simple, choose a random string $r$ of
length $n$ and give $r$ and $r\oplus x$ to the participants ($r\oplus x$ is a
bitwise \textsc{xor} of $x$ and $r$.) Both $r$ and $r\oplus x$ in isolation are
uniformly distributed among all $n$-bit strings, so they have no information
about $x$.

The general setting for secret sharing can be described as follows. We consider
some finite set $\Ks$ whose elements are called \emph{secrets}. We also have a
finite set $\Ps$ of \emph{participants}. An \emph{access structure} is a
non-empty set $\Gamma$ whose elements are groups of participants, i.e., a
non-empty subset of $2^{\Ps}$. Elements of $\Gamma$ are called
\emph{authorized} groups of participants (that should be able to reconstruct
the secret). Other subsets of $\Ps$ are called \emph{forbidden} groups (that
should get no information about the secret). We always assume that $\Gamma$ is
upward-closed (it is natural since a bigger group knows more)\footnote{One can
also consider a more general setting where some groups are neither allowed nor
forbidden (so there is no restriction on the information they may get about the
secret.) We do not consider this more general setting here.}.

In our initial example $\Ks=\mathbb{B}^n$ (the set of $n$-bit strings),
$\Ps=\{1,2\}$ (we have two participants labeled $1$ and $2$), and $\Gamma$
consists of the set $\{1,2\}$ only.

In general, \emph{perfect secret sharing} can be defined as follows. For every
participant $p\in\Ps$ a set $\Ss_p$ is fixed; its elements are $p$'s
\emph{shares}. For every $k\in\Ks$ we have a tuple of $\#\Ps$ dependent random
variables $\sigma_p\in\Ss_p$. There are two conditions:
\begin{itemize}

\item for every authorized set $A\in\Gamma$ it is possible to reconstruct
uniquely the secret $k$ from the shares given to participants in $A$ (i.e., for
different secrets $k$ and $k'$ the projections of the corresponding random
tuples onto the $A$-coordinates have disjoint ranges);

\item for every forbidden set $B\notin\Gamma$ the participants in $B$ get no
information about the secret (i.e., for different secrets $k$ and $k'$ the
projections of the corresponding random tuples onto $B$-coordinates are
identically distributed).

\end{itemize}

Various versions of combinatorial schemes were introduced in \cite{BS92} and
\cite{brickell-oclassisss}. Note that in this definition we have no
probability distribution on the set of secrets. It is natural for the setting
when somebody gives us the secret (i.e., the user chooses her password) and we
have to share whatever is given to us.

We consider another setting (as, first in \cite{Karnin-OSSS} and
further developed in \cite{capocellietal-oss}) where secret is also a random
variable. Consider a family of random variables: one ($\varkappa$)
for the secret and one ($\sigma_p$) for each participant $p$.  This family is a
perfect secret sharing scheme if 
\begin{itemize}

\item for every authorized set $A$ the projection $\sigma_A = \{\sigma_p, p\in
A\}$ determines $\varkappa$;

\item for every forbidden set $B$ the projection $\sigma_B$ is independent with
$\varkappa$.

\end{itemize}

These conditions can be rewritten using Shannon information theory: the first
condition says that $H(\varkappa|\sigma_A)=0$, and the second says that
$I(\sigma_B:\varkappa)=0$. Here $H(\cdot|\cdot)$ stands for conditional Shannon
entropy and $I(\cdot:\cdot)$ stands for mutual information. (To be exact, we
should ignore events of probability zero when saying that $\sigma_A$ determines
$\varkappa$.  To avoid these technicalities, let us agree that our probability
space is finite and all non-empty events have positive probabilities.) 

These definitions are closely related. Namely, it is easy to see that:
\begin{itemize}

\item Assume that a perfect secret sharing scheme in the sense of the first
definition is given. Then for every distribution on secrets (random variable
$\varkappa\in\Ks$) we get a scheme in the sense of the second definition as
follows. For each secret $k\in \Ks$ we have a family of dependent random
variables $\sigma_p$, and we use them as conditional distribution of
participants' shares if $\varkappa=k$.

\item Assume that a perfect secret sharing scheme in the sense of the second
definition is given, and all secrets have positive probability according to
$\varkappa$. Then the conditional distributions of $\sigma_p$ with the
condition $\varkappa=k$ form a scheme in the sense of the first definition.

\end{itemize}

This equivalence shows that in the second version of the definition the
distribution on secrets is irrelevant (as far as all element in $\Ks$ have
positive probability): we can change $\varkappa$ keeping the conditional
distributions, and still have a perfect secret sharing scheme. The advantage of
the second definition is that we can use standard techniques from Shannon
information theory (e.g., information inequalities).

The general task of secret sharing can now be described as follows: given a set
of secrets $\Ks$ and an access structure $\Gamma$ construct a secret sharing
scheme. This is always possible (see~\cite{benaloh-gssmf,ito-generalas}).
However, the problem becomes much more difficult if we limit the size of
shares. It is known (see~\cite{capocellietal-oss}) that in the non-degenerate
case shares should be at least of the same size as the secret: $\#\Ss_p\ge
\#\Ks$ for every essential participant $p$. (A participant is \emph{essential}
if we remove it from some authorized group and get a forbidden group.
Evidently, non-essential participants can be just ignored.) This motivates the
notion of \emph{ideal} secret sharing scheme where $\#\Ss_p=\#K$ for every
essential participant $p$.

Historically, the motivating example for secret sharing was Shamir's scheme
(see~\cite{shamir-howtoss}). It has $n$ participants, authorized groups are
groups of $t$ or more participants (where $t$ is an arbitrary threshold).
Secrets are elements of a finite field $\F$ of size greater than $n$. To share
a secret $k$, we construct a polynomial $$
P_k(x)=k+r_1x+r_2x^2+\ldots+r_{t-1}x^{t-1} $$ where the $r_i$ are chosen
independently and uniformly. The shares are the values $P(x_1),\ldots,P(x_{n})$
for distinct nonzero field elements $x_1,\ldots,x_{n}$ (for each participant a
non-zero element of the field is fixed). Any $t$ participants together can
reconstruct the polynomial while for any $t-1$ participants all combinations of
shares are equally probable (for every $k$). This scheme is ideal.

Not every access structure allows an ideal secret sharing scheme. For example,
no ideal scheme exists for four participants $a,b,c,d$ where the authorized
groups are $\{a,b\}$, $\{b,c\}$ and $\{c,d\}$ and all their supersets
(see~\cite{benaloh-gssmf,kurosawa-clb}; it is shown there that every secret
sharing scheme for this access structure satisfies $\log \#S_b+\log \#S_c\ge
3\log \#\Ks$).

It is therefore natural to weaken the requirements a bit and to allow non-ideal
secret sharing schemes still having shares of reasonable size. For example, we
may fix some $\rho\ge 1$ and ask whether for a given access structure there
exists a perfect secret sharing scheme where $\max_{p\in\Ps}\log\#S_p\le
\rho\log\#\Ks$.  (The answer may depend on the size of $\Ks$.)


Unfortunately, not much is known about this. There are quite intricate lower
bounds for different specific access structures (some proofs are based on
non-Shannon inequalities for entropies of tuples of random variables, see
\cite{beimel-ssnsii,metcalfb-iub}).  The best known lower bounds
for sharing $m$-bit secrets (for some fixed access scheme) are still rather
weak, like $\frac{n}{\log n}m$ (see \cite{csirmaz-ssmustbelarge}). On the other
hand, the known upper bounds for general access structures are exponential in
the number of participants (and rather simple, see
\cite{benaloh-gssmf,ito-generalas}).


\section{Nonperfect secret sharing}

The relaxation of the perfectness property is natural when efficiency is
involved (see \cite{beimel-wpss,kurosawa-npsssm,srinathan-npssgas}). Our
attempt here is to encapsulate existing definitions of non-perfect schemes in
the Shannon framework. We consider possible relaxations of the requirements and
introduce several versions of \emph{almost-perfect} secret sharing. By this we
mean that we allow limited ``leaks'' of information to forbidden groups of
participants. We also consider schemes where authorized groups need some
(small) additional information to reconstruct the secret.  Such
approximately-perfect schemes are quite natural from the practical point of
view.  Also, the gain in flexibility may help overcome the difficulty of
constructing efficient perfect schemes which seems related to difficult
problems of combinatorial or algebraic nature. 

Let us discuss possible definitions for almost-perfect schemes. Now we want to
measure the leak of information (or the amount of missing information), and the
most natural way is to replace the equations $H(\varkappa|\sigma_A)=0$ and
$I(\sigma_B:\varkappa)=0$ by inequalities $H(\ldots)<\varepsilon_1$ and
$I(\ldots)<\varepsilon_2$, for some bounds $\varepsilon_1$ and $\varepsilon_2$
(normally, a small fraction of the amount of information in the secret itself).

The problem here is that measuring the information leak and missing information
in this way, we need to fix some distribution on secrets, and this looks
unavoidable even from the intuitive point of view. Imagine that we have
$1000$-bit secrets, and the sharing scheme works badly for secrets with $900$
trailing zeros (e.g., discloses them to all participants).  If the information
leak might not be huge for the uniform distribution, since $100$ leaked bits
are multiplied by $2^{-900}$ probability to have $900$ trailing zeros; it can
however become significant if the secret is not chosen uniformly, e.g. the user
chooses a short password padded with trailing zeros.

An interesting question (that we postpone for now) is how significant could be
this dependence. One may expect that a good secret sharing scheme remains
almost as good if we change slightly the distribution, but we cannot prove any
natural statement of this kind. So we have to include the distribution on
secrets in all the definitions.

Let $\Gamma$ be an access structure. Let $\varkappa$ and $\sigma_p$ (for all
participants $p$) be some random variables (on the same probability space, so
we may consider their joint distribution). Such a family is called a (not
necessarily perfect) secret sharing scheme, and its parameters are:
\begin{itemize}

\item distribution on secrets (in particular, the entropy of $\varkappa$ is
important);

\item \emph{information rate}, $H(\varkappa)$, the entropy of the secret
divided by the maximal entropy of a single share;

\item \emph{missing information ratio}, the maximal value of
$H(\varkappa|\sigma_A)$ for all authorized $A$, divided by $H(\varkappa)$;

\item \emph{information leak ratio}, the maximal value of
$I(\sigma_B:\varkappa)$ for all forbidden $B$, divided by $H(\varkappa)$.

\end{itemize}

To simplify our statements, we consider asymptotic behaviors and give the
following template definition of almost-perfect secret sharing:

\begin{de}
An access structure $\Gamma$ on the set $P$ of participants can be
almost-perfectly implemented with parameters
$(\rho,\varepsilon_1,\varepsilon_2)$ if there exists a sequence of secret
sharing schemes for the secret variable $\varkappa_n$, such that 
\begin{itemize}
\item $H(\varkappa_n)\to\infty$;
\item the $\limsup$ of the information rates does not exceed $\rho$;
\item the missing information ratio converges to $\varepsilon_1$ as $n\to\infty$;
\item the information leak ratio converges to $\varepsilon_2$ as $n\to\infty$.
\end{itemize}
\end{de}


In this article we introduce several definitions of almost-perfect secret
sharing schemes. Two versions in the framework of Shannon entropy for which
we show that the stronger definition, where we require no missing information,
gives the same notion; one version in the framework of Kolmogorov complexity.
We prove that all these approaches are asymptotically equivalent (have
equivalent asymptotical rates of schemes for each access structure). Hence,
we can combine tools of Shannon's information theory and Kolmogorov
complexity to investigate the properties of nonperfect secret sharing schemes.

Rather than providing constructions or stating trivial counterparts of known
theorems, we emphasize our study on the behaviour of such schemes. Simple
properties of perfect schemes provide new natural questions for nonperfect
schemes which are in general not trivial. The main contribution of the paper is
the proof of few of such natural properties, namely and Proposition
\ref{prop-transform} and Theorem \ref{maintheo} for scaling down a nonperfect
scheme while keeping roughly the same information leak ratio.

We believe our modest contribution is a small step towards a promising path to
discover new constructions and theorems in nonperfect secret sharing.

\subsection{Definitions}

We consider two different versions of the definition of approximately-perfect
secret sharing schemes. In the first one, non-perfect secret sharing schemes
are allowed to give some information to forbidden groups and/or not give
authorized groups the entire secret: 

\begin{de}
\label{def-approx-e1-e2}
Let $\Ks$ be a finite set of secrets, a $(\varepsilon_1,\varepsilon_2)$-nonperfect
secret sharing scheme for secrets in $\Ks$ implementing an access structure
$\Gamma$ is a tuple of jointly distributed discrete random variables
$(\varkappa,\sigma_1,\ldots,\sigma_n)$ such that
\begin{itemize}
\item if $A\in\Gamma$ then $H(\varkappa|\sigma_A)\le\varepsilon_1H(\varkappa)$
\item if $B\notin\Gamma$ then $I(\varkappa:\sigma_B)\le\varepsilon_2H(\varkappa)$
\end{itemize}
\end{de}

In this definition, authorized groups may fail to recover at most $\varepsilon_1$
bits of the secret while forbidden groups can not learn more than $\varepsilon_2$
bits. A probably more natural version of a non-perfect scheme is asymmetric:
authorized groups know everything about the secret, while forbidden groups can
keep not more than $\varepsilon$ bits of information about the secret:

\begin{de}
\label{def-approx-e}
Let $\Ks$ be a finite set of secrets, a $\varepsilon$-nonperfect secret sharing
scheme for secrets in $\Ks$ implementing an access structure $\Gamma$ is a
tuple of jointly distributed discrete random variables
$(\varkappa,\sigma_1,\ldots,\sigma_n)$ such that
\begin{itemize}
\item if $A\in\Gamma$ then $H(\varkappa|\sigma_A)=0$
\item if $B\notin\Gamma$ then $I(\varkappa:\sigma_B)\le\varepsilon H(\varkappa)$
\end{itemize}
\end{de}

By $\NPS{\Gamma}{N}{S}{\varepsilon}$, resp.
$\NPS{\Gamma}{N}{S}{(\varepsilon_1,\varepsilon_2)}$, we refer to a
$\varepsilon$-nonperfect, resp. $(\varepsilon_1,\varepsilon_2)$-nonperfect,
secret sharing scheme implementing access structure $\Gamma$ for $N$-bit
secrets with single shares of entropy at most $S$. We use $\PS{\Gamma}{N}{S}$
for perfect schemes, i.e., when it is the case that $\varepsilon_1$ and
$\varepsilon_2$ are null.

We now introduce the \emph{almost-perfect} versions of secret sharing, that
denotes an asymptotic sequence of nonperfect schemes for a fixed access
structure where the leak can be made negligible as the size of the secret
grows.

\begin{de}
We say that an access structure $\Gamma$ can be almost-perfectly implemented,
with parameters $(\rho,\varepsilon_1,\varepsilon_2)$, if there exists a sequence
of nonperfect schemes in the sense of Definition~\ref{def-approx-e1-e2} such
that parameters converge to $(\rho,\varepsilon_1,\varepsilon_2)$.  i.e., if
$$
\exists (\NPS{\Gamma}{N_m}{S_m}{(\varepsilon^1_m,\varepsilon^2_m)})_{m\in\N} 
\text{ s.t. }
(\varepsilon^1_m,\varepsilon^2_m)\to(\varepsilon_1,\varepsilon_2) \text{ and } 
{N_m\over S_m}\to \rho
 \text{ as } m\to\infty
$$
Moreover, we say that $\Gamma$ can be almost-perfectly implemented without
missing information when the nonperfect schemes are in the sense
of Definition~\ref{def-approx-e}.
\end{de}

\begin{prop}
Let $\Gamma$ be an access structure and $\rho$ be a positive real, the
following are equivalent
\begin{itemize}

\item $\Gamma$ can be almost-perfectly implemented 

\item $\Gamma$ can be almost-perfectly implemented without
missing information.

\end{itemize}
\end{prop}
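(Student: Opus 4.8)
One implication is immediate: a scheme in the sense of Definition~\ref{def-approx-e} with leak $\varepsilon$ is in particular a scheme in the sense of Definition~\ref{def-approx-e1-e2} with parameters $(0,\varepsilon)$, so any sequence witnessing that $\Gamma$ is almost-perfectly implementable without missing information also witnesses that it is almost-perfectly implementable, with the same $\rho$ and with $\varepsilon_1=0$. For the converse, suppose we are given $(\varepsilon^1_m,\varepsilon^2_m)$-nonperfect schemes $(\varkappa_m,\sigma^m_1,\dots,\sigma^m_n)$ with $\varepsilon^1_m,\varepsilon^2_m\to 0$ and information rate tending to $\rho$; write $N:=H(\varkappa_m)$ (restricting $\varkappa_m$ to a typical set, which perturbs everything by only $o(N)$, we may assume $\varkappa_m$ is essentially uniform over $2^{(1+o(1))N}$ secrets), and let $A_1,\dots,A_k$ be the minimal authorized sets of $\Gamma$ (both $k$ and $n$ are constants). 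The plan is, for each $m$, first to delete a vanishing part of the probability space so that the secret has small conditional support given each $\sigma^m_{A_i}$, and then to append to the shares a perfectly shared ``reconciliation index'' that lets every authorized group recover the secret exactly, at $o(N)$ cost in share size and with no extra leakage.

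For the first step, set $L:=\lceil\sqrt{\varepsilon^1_m}\,N\rceil$ and let $G$ be the event that, for every $i\le k$, the conditional probability that $\sigma^m_{A_i}$ assigns to the actual value of $\varkappa_m$ is at least $2^{-L}$. Markov's inequality applied to $-\log\Pr[\varkappa_m\mid\sigma^m_{A_i}]$, whose conditional expectation is $H(\varkappa_m\mid\sigma^m_{A_i})\le\varepsilon^1_m N$, gives $\Pr[\overline{G}]\le k\,\varepsilon^1_m N/L\le k\sqrt{\varepsilon^1_m}\to 0$. Replace $\varkappa_m$ and the $\sigma^m_p$ by their conditional versions given $G$. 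Because $\Pr[G]=1-o(1)$, the standard bounds for conditioning on a high-probability event show that $H(\varkappa_m)$, each $H(\sigma^m_p)$ and each $I(\varkappa_m:\sigma^m_B)$ change by only $o(N)$, so the information rate still tends to $\rho$ and every leak ratio still tends to $0$; moreover, for each $i$ and each value $a$ of $\sigma^m_{A_i}$, the support of $\varkappa_m$ given $\sigma^m_{A_i}=a$ and $G$ is contained in $\{\kappa:\Pr[\varkappa_m=\kappa\mid\sigma^m_{A_i}=a]\ge 2^{-L}\}$, a set of size at most $2^L$.

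For the second step, fix for each $i$ and each $a$ an enumeration of the (at most $2^L$) elements of this conditional support, let $r_i\in\{0,1\}^L$ be the position of $\varkappa_m$ in the enumeration selected by $\sigma^m_{A_i}$ --- so that $(\sigma^m_{A_i},r_i)$ determines $\varkappa_m$ and $H(r_i)\le L$ --- and secret-share each $r_i$ perfectly, using fresh, mutually independent randomness, for the access structure $\Gamma_i:=\{C:A_i\subseteq C\}$ (for instance by handing additive shares of $r_i$ to the members of $A_i$ and nothing to anyone else), obtaining shares $\tau^i_p$ with $H(\tau^i_p)\le L$. Put $\sigma'_p:=(\sigma^m_p|_G,\tau^1_p,\dots,\tau^k_p)$. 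Every authorized $A$ contains some $A_i$, so it recovers $r_i$, already knows $\sigma^m_{A_i}$, and hence determines $\varkappa_m$; thus $H(\varkappa_m\mid\sigma'_A)=0$. Every forbidden $B$ contains no $A_i$ (since $\Gamma$ is upward-closed), hence is a forbidden set of each $\Gamma_i$, so $\tau^i_B$ is independent of $r_i$; and since $\tau^i_B$ is obtained from $r_i$ by randomness independent of everything else, a short chain-rule computation gives $I(\varkappa_m:\sigma'_B)=I(\varkappa_m:\sigma^m_B|_G)\le\varepsilon^2_m N\,(1+o(1))+O(1)$, while $H(\sigma'_p)\le H(\sigma^m_p|_G)+kL\le(\max_q H(\sigma^m_q))(1+o(1))$. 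Dividing by $H(\varkappa_m|_G)=N(1-o(1))$: the new leak ratio tends to $0$ and the rate still tends to $\rho$, which is the required sequence of schemes without missing information.

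The routine part of this argument --- the perturbation bounds for conditioning on $G$, and the chain-rule estimate showing the $\tau^i_B$ leak nothing about $\varkappa_m$ --- I would package into one or two lemmas. The genuinely delicate point, and the place where a naive attempt collapses, is the first step: the fibre $\{\kappa:\Pr[\varkappa_m=\kappa\mid\sigma^m_{A_i}=a]>0\}$ can be the whole secret space even when $H(\varkappa_m\mid\sigma^m_{A_i})$ is minuscule, so without the trimming there is no bound at all on the size of a reconciliation index; one has to convert the on-average smallness of the Shannon conditional entropy into a pointwise bound on the support, and the choice $L=\sqrt{\varepsilon^1_m}N$ is exactly what makes the discarded mass ($\asymp\varepsilon^1_m N/L$) and the relative reconciliation overhead ($\asymp L/N$) tend to $0$ at the same time.
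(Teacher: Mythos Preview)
Your argument is correct, but it takes a genuinely different route from the paper's.

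The paper reduces the proposition to a single-scheme transformation (Proposition~\ref{prop-transform}) via a short coding lemma: for any $\alpha,\beta$ there is a $\gamma$ with $H(\alpha\mid\beta,\gamma)=0$ and $H(\gamma)\le 2H(\alpha\mid\beta)+O(1)$, built from Huffman codes for the conditional distributions of $\alpha$ given each value of $\beta$. For every minimal authorized set $A$ the paper simply \emph{appends} this conditional description of $\varkappa$ given $\sigma_A$ to one participant of $A$ --- no trimming, no perfect sub-scheme --- and accepts that forbidden groups may see these extra strings. Since each has entropy $O(\varepsilon_1 N)$ and there are at most $2^n$ of them, both the share size and the leak grow by $O(\varepsilon_1 N\cdot 2^n)$, which is $o(N)$ when $\varepsilon_1\to 0$ and $n$ is fixed.

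Your approach instead (i) conditions on a high-probability event to force each conditional support to have size $\le 2^L$ with $L=\sqrt{\varepsilon_1}\,N$, and (ii) perfectly shares the reconciliation index among the members of $A_i$, so forbidden groups gain \emph{nothing} from the added data. This is a nice refinement: you trade the Huffman-code lemma for a Markov/typicality argument, and you circumvent the paper's remark that ``we cannot use a perfect scheme to share the missing information'' by sharing for the trivial structure $\{C:A_i\subseteq C\}$ rather than for $\Gamma$ itself. The payoff is a cleaner final leak ratio ($\varepsilon_2+o(1)$ versus $\varepsilon_2+O(\varepsilon_1 2^n)$). The cost is the two-stage conditioning, which needs the support-size control you flag; in the paper's convention ``$N$-bit secrets'' already means $|\Ks|=2^N$, so the typical-set preprocessing is unnecessary there, while the paper's Huffman-code route needs no support bound at all and yields a non-asymptotic statement for a single scheme.
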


This proposition is a corollary of the following result: one can transform a
scheme with some missing information into a scheme without missing information
by increasing the size of shares.

The natural idea to prove this is to add the missing information to authorized
groups. However this is already not trivial to implement. Indeed, we want to
keep the leak small, hence we can not use a perfect scheme to share the missing
information.  The plan is to "materialize" the missing information and add it
to each participant. The small amount of information will therefore also
increase the information leak by a small amount.  The proposition tells us that
we can indeed achieve a new leak comparable to the previous one.

\begin{prop}
\label{prop-transform}
If $\Gamma$ is an access structure on $n$ participants, then
$$
\exists\NPS{\Gamma}{N}{S}{(\varepsilon_1,\varepsilon_2)} \Ra 
\exists\NPS{\Gamma}{N}{S+O(\varepsilon_1 N2^n)}
                   {(\varepsilon_2+O(\varepsilon_1N2^n))}
$$
\end{prop}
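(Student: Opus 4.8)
The plan is to carry out the ``materialization'' idea outlined above; the whole argument then reduces to one estimate about a single pair of jointly distributed random variables. \textbf{Step 1 (materializing missing information).} Fix an $(\varepsilon_1,\varepsilon_2)$-nonperfect scheme $(\varkappa,\sigma_1,\dots,\sigma_n)$ and an authorized set $A\in\Gamma$, so $H(\varkappa\mid\sigma_A)\le\varepsilon_1 H(\varkappa)\le\varepsilon_1 N$. I want a random variable $\mu_A$, a deterministic function of $(\varkappa,\sigma_A)$, with $H(\varkappa\mid\sigma_A,\mu_A)=0$ and, crucially, small \emph{marginal} entropy $H(\mu_A)=O(\varepsilon_1 N)$. (An optimal code for $\varkappa$ \emph{given} $\sigma_A$ only controls $H(\mu_A\mid\sigma_A)$, which is useless here, since a forbidden group sees $\mu_A$ but not $\sigma_A$.) The construction: for the observed value $a=\sigma_A$, sort the possible secrets by decreasing conditional probability $\Pr[\varkappa=x\mid\sigma_A=a]$ and let $\mu_A$ be the position (``rank'') of the realised secret in that list; then $\sigma_A$ together with $\mu_A$ recovers $\varkappa$, so $H(\varkappa\mid\sigma_A,\mu_A)=0$. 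The estimate $H(\mu_A)=O\bigl(H(\varkappa\mid\sigma_A)\bigr)$ follows by a short calculation whose two ingredients are that the $r$-th largest probability of any distribution is at most $1/r$ (so that $\expect[\log\mu_A]\le H(\varkappa\mid\sigma_A)$) and that $\mu_A=1$ with probability at least $1-H(\varkappa\mid\sigma_A)$ (obtained by coarsening each conditional law of $\varkappa$ to the indicator of its most likely value). In particular, when $H(\varkappa\mid\sigma_A)=0$ the rank is constantly $1$, so $\mu_A$ is constant and a perfect scheme is left unchanged.

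\textbf{Step 2 (assembling the scheme).} Let $A_1,\dots,A_k$ be the minimal authorized sets, so $k\le 2^{n}$, and put $\widehat\sigma_p=(\sigma_p,\mu_{A_1},\dots,\mu_{A_k})$ for every participant $p$, keeping the same secret $\varkappa$. If $A\in\Gamma$, then $A_i\subseteq A$ for some $i$, and $\widehat\sigma_A$ contains both $\sigma_{A_i}$ and $\mu_{A_i}$, so $H(\varkappa\mid\widehat\sigma_A)=0$: no missing information. If $B\notin\Gamma$, then by the chain rule and subadditivity of entropy
$$
I(\varkappa:\widehat\sigma_B)\ \le\ I(\varkappa:\sigma_B)+\sum_{i=1}^{k}H(\mu_{A_i})\ \le\ \varepsilon_2 H(\varkappa)+2^{n}\cdot O(\varepsilon_1 N)\ =\ \varepsilon_2 H(\varkappa)+O(\varepsilon_1 N 2^{n}),
$$
and likewise $H(\widehat\sigma_p)\le H(\sigma_p)+\sum_i H(\mu_{A_i})\le S+O(\varepsilon_1 N 2^{n})$. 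Since $H(\varkappa)\le N$, this is exactly an $\NPS{\Gamma}{N}{S+O(\varepsilon_1 N 2^{n})}{(0,\,\varepsilon_2+O(\varepsilon_1 N 2^{n}))}$, which in particular is $\varepsilon$-nonperfect with $\varepsilon=\varepsilon_2+O(\varepsilon_1 N 2^{n})$, as claimed.

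\textbf{The main obstacle} is Step~1: converting smallness of the \emph{conditional} entropy $H(\varkappa\mid\sigma_A)$ into a genuine random variable of small \emph{marginal} entropy that still determines the secret together with $\sigma_A$. This is precisely where the tempting shortcut --- ``share the missing bits by a perfect scheme and hand them out to everyone'' --- fails, since a perfect scheme for even a handful of bits would wreck the leak guarantee; the rank device is what makes $H(\mu_A)$, and not merely $H(\mu_A\mid\sigma_A)$, of order $H(\varkappa\mid\sigma_A)$. Everything after Step~1 --- the union bound over the $\le 2^{n}$ minimal authorized sets, plus subadditivity of entropy and of mutual information --- is routine bookkeeping.
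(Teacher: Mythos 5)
Your proof is correct and follows the same architecture as the paper's: materialize, for each minimal authorized set $A$, a random variable of small \emph{marginal} entropy that together with $\sigma_A$ determines $\varkappa$, hand these out, and finish with subadditivity and a union bound over the at most $2^n$ minimal sets. The only real difference is how the key lemma is realized. The paper builds, for each value $b$ of $\sigma_A$, a Huffman code for the conditional law of $\varkappa$, makes the union of all these codebooks prefix-free by doubling bits and appending $01$, and bounds $H(\gamma)$ by the expected codeword length via Kraft's inequality, getting $H(\gamma)\le 2H(\varkappa\mid\sigma_A)+O(1)$. Your rank variable $\mu_A$ is a slicker packaging of the same idea: since the $r$-th largest conditional probability is at most $1/r$, you get $\expect[\log\mu_A]\le H(\varkappa\mid\sigma_A)$ pointwise-then-in-expectation, and then $H(\mu_A)$ is controlled by encoding the integer $\mu_A$ with a universal prefix code (that step, which you leave as ``a short calculation,'' is where the work hides; your second ingredient $\Pr[\mu_A=1]\ge 1-h$ is not actually needed once you invoke such a code). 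One caveat you should make explicit: the honest conclusion is $H(\mu_A)\le O\bigl(H(\varkappa\mid\sigma_A)\bigr)+O(1)$, not $O\bigl(H(\varkappa\mid\sigma_A)\bigr)$ --- a two-point conditional law with a tiny minority mass already gives $H(\mu_A)/H(\varkappa\mid\sigma_A)\to\infty$ --- but the paper's own lemma carries the same additive $O(1)$, and the proposition's $O(\varepsilon_1 N 2^n)$ bookkeeping silently absorbs it in both treatments. Your Step 2 (appending all $\mu_{A_i}$ to every participant rather than only to members of $A_i$) is slightly more wasteful than the paper's but stays within the stated budget.
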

\begin{proof}
Assume there is a $\NPS{\Gamma}{N}{S}{(\varepsilon_1,\varepsilon_2)}$, let us
transform it as follows. Take a minimal authorized set $A\in\Gamma^-$, by
definition it holds that $H(\varkappa|\sigma_A)\le\varepsilon_1N$. Informally,
it means that $A$ lacks $\varepsilon_1N$ bits of information about the secret.
We materialize this information and add it to $A$. More precisely, we use
the following lemma about conditional descriptions:

\begin{lem}\label{cond-lemma}
Let $\alpha$ and $\beta$ be two random variables defined on the same space.
Then there exists a variable $\gamma$ (defined on the same space) such that
$H(\alpha|\beta,\gamma)=0$ and $H(\gamma)\le 2H(\alpha|\beta)+O(1)$.
\end{lem}

\begin{proof}
Let $\beta$ be distributed on a set $\{b_1,\ldots, b_s\}$. For each fixed value
$b_j$, we have a conditional distribution on values of $\alpha$ given the
condition $\beta=b_j$.  We can construct for this conditional distribution on
values of $\alpha$ a prefix-free binary code $c_{1j},\ldots,c_{mj}$ such that
the average length of codewords is at most $H(\alpha | \beta=b_j)+1$ (e.g., we
can take Huffman's code).

Let $\gamma$ be the corresponding codeword: if $\beta = b_j$ and $\alpha=a_i$
then $\gamma=c_{ij}$ (the $i$-th codeword from the code constructed for the
distribution of $\alpha$ under condition $\beta = b_j$). 

Given a value $b_j$ of $\beta$ and a codeword $c_{ij}$ from the corresponding
code, we can uniquely determine the corresponding value of $\alpha$. Hence, we
get $H(\alpha | \beta, \gamma)=0$. It remains to estimate entropy of $\gamma$.

The defined above $\gamma$ ranges over the union of all codewords $c_{ij}$
(from all codes constructed for all possible values of $\beta$). The average
length of bit strings $c_{ij}$
 $$
 \expect\limits_{ij} |c_{ij}|  = \expect\limits_i (\expect\limits_{j} |c_{ij}| ) <  \expect\limits_i (H(\alpha|\beta=b_j)+1 ) = 
 H(\alpha|\beta=b_j)+1.
 $$
This observation is enough to estimate the entropy of $\gamma$.
 
The union of all codewords $c_{ij}$ is not necessarily prefix-free even if the
codes $\{c_{1j},\ldots,c_{mj}\}$ were prefix-free for each value of $\beta$.
However, we can convert any set of bit strings into a prefix-free code by a
simple transformation: we double each bit in each string, and add at the end of
each string the pair of bits $01$. E.g., a string $00101$ is converted into
$000011001101$. This simple trick converts the set of $c_{ij}$ into a
prefix-free set $c'_{ij}$ such that

$$
  \expect\limits_{ij} |c'_{ij}| = 2 \expect\limits_{ij} |c_{ij}| + 2
$$

Thus, random variable $\gamma$ can be considered as a distribution on this
prefix-free set $c'_{ij}$. It is well known that for any distribution on a
prefix-free set, the entropy is not greater than the average length of
codewords (it follows from Kraft's inequality). Hence, entropy of $\gamma$ is
not greater than the average length of $c'_{ij}$, i.e., not greater than
$2H(\alpha|\beta) + O(1)$.  
\end{proof}

We apply lemma~\ref{cond-lemma} to encode the secret $k$ conditional to the
shares of $A$. Since this random variable has entropy at most $\varepsilon_1N$,
the encoding can be done by strings of size at most $O(\varepsilon_1N )+ O(1)$.
We add this ``conditional description'' to any participant of $A$. Now the
participants of $A$ can together determine the secret uniquely.  We do the same
for all minimal authorized groups in $\Gamma^-$. So, now all authorized groups
have all information about the secret.

We added some additional data to several participants (some participants can
obtain several different ``conditional descriptions'' since one participant can
belong to several minimal authorized groups). However all additional
information given to participants is of size only  $O(\varepsilon_1N2^n)$,
hence, the extra information is given to forbidden groups is at most
$O(\varepsilon_1N2^n)$. The size of the shares in the new schemes is at most
$S+O(\varepsilon_1N2^n)$, and we are done.  
\end{proof}

An interesting open question about almost-perfect secret sharing is to settle
whether it is equivalent to perfect secret sharing or not:

\begin{question} 
Can we achieve essentially better information rates with almost-perfect schemes
than with perfect schemes ?
\end{question}

A weaker form of this question where leaks are exactly zero has been answered
by Beimel~et~al in \cite{beimel-mfariss} (using a result of Mat\'{u}\v{s}
\cite{MatusMatrRepr}) where they construct a \emph{nearly-ideal} access
structure, i.e. access structure that can be implemented perfectly with an
information rate as close to $1$ as we want but not equal. In fact, with the
same kind of arguments we can construct an almost-perfect scheme for the same
access structure with small leaks but information rate exactly one.

\begin{prop}
There is an access structure which can be implemented by an almost-perfect
scheme with parameters $(1,0,0)$ and rate exactly one but has no ideal perfect
scheme.
\end{prop}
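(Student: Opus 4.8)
The plan is to recycle the nearly-ideal access structure of Beimel~et~al.~\cite{beimel-mfariss}. Recall that this is an access structure $\Gamma$ admitting \emph{no} ideal perfect scheme, yet admitting, for every $\delta>0$, a perfect scheme of information rate greater than $1-\delta$. So fix a sequence of perfect schemes $(\PS{\Gamma}{N_m}{S_m})_{m\in\N}$ with $N_m\to\infty$, where $S_m=\max_p H(\sigma_p^{(m)})$ is the maximal single-share entropy and $N_m/S_m\to 1$; note $S_m>N_m$ for all $m$ (always $S_m\ge N_m$ for essential participants, and $S_m=N_m$ would make the scheme ideal). The idea is to pad the secret of each such scheme with independent junk bits that no coalition needs: authorized groups will ``miss'' exactly the junk, a vanishing fraction of the now-larger secret, while the junk, being independent of every share, adds nothing to the leak, so the leak stays exactly zero.

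Concretely, for each $m$ I would keep the shares $\sigma_p^{(m)}$ unchanged, adjoin to the probability space a fresh random variable $\lambda_m$ independent of $(\varkappa_m,\sigma_1^{(m)},\dots,\sigma_n^{(m)})$ with $H(\lambda_m)=S_m-N_m$, and declare the new secret to be $\varkappa'_m=(\varkappa_m,\lambda_m)$. Then $H(\varkappa'_m)=H(\varkappa_m)+H(\lambda_m)=S_m$, so the information rate of the new scheme is $H(\varkappa'_m)/\max_p H(\sigma_p^{(m)})=S_m/S_m=1$, exactly. For an authorized $A\in\Gamma$, $H(\varkappa'_m\mid\sigma_A)=H(\varkappa_m\mid\sigma_A)+H(\lambda_m\mid\varkappa_m,\sigma_A)=0+(S_m-N_m)$, so the missing-information ratio is $1-N_m/S_m\to 0$. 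For a forbidden $B\notin\Gamma$,
\[
I(\varkappa'_m:\sigma_B)=I(\varkappa_m:\sigma_B)+I(\lambda_m:\sigma_B\mid\varkappa_m)=0+0=0 ,
\]
using perfectness of the original scheme for the first term and independence of $\lambda_m$ for the second. Thus $(\NPS{\Gamma}{S_m}{S_m}{(1-N_m/S_m,\,0)})_{m}$ is a sequence of nonperfect schemes (Definition~\ref{def-approx-e1-e2}) of rate exactly one whose parameters converge to $(1,0,0)$, which is precisely what ``$\Gamma$ can be almost-perfectly implemented with parameters $(1,0,0)$ and rate one'' means; and $\Gamma$ has no ideal perfect scheme by~\cite{beimel-mfariss}.

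I do not expect a genuine obstacle here: the only nontrivial ingredient is the existence of the nearly-ideal structure, which is quoted from~\cite{beimel-mfariss} (itself built on Mat\'u\v{s}'s matroid representation results~\cite{MatusMatrRepr}), and the rest is the one-line entropy bookkeeping above. The single point that needs care is to size $\lambda_m$ so that $H(\varkappa'_m)$ equals the \emph{maximal} share entropy $S_m$ and not merely something close to it — this is what makes the rate of every scheme in the sequence equal to $1$ on the nose, as the statement demands, rather than only in the limit.
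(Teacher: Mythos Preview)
Your proof is correct, and it is a genuinely different (and cleaner) route than the one in the paper. Both arguments start from the same nearly-ideal access structure of Beimel~et~al.\ (built on Mat\'u\v{s}'s Fano/non-Fano result), which supplies perfect schemes with rate $N_m/S_m\to 1$ but no ideal scheme. The paper then works on the \emph{share} side: it takes an ideal $\mathcal{F}^-$-scheme over a domain of size $2^N+1$ and collapses one share value into another, shrinking the share alphabet to $2^N$ so that the rate becomes exactly $1$; one then has to argue that this collapsing introduces at most $O(1/N)$ missing information and no leak. You instead work on the \emph{secret} side: keep the shares untouched and inflate the secret by an independent $\lambda_m$ with $H(\lambda_m)=S_m-N_m$, so the rate is $S_m/S_m=1$ on the nose; the missing information is exactly the junk $\lambda_m$, with ratio $1-N_m/S_m\to 0$, and the chain-rule computation $I(\varkappa_m,\lambda_m:\sigma_B)=I(\varkappa_m:\sigma_B)+I(\lambda_m:\sigma_B\mid\varkappa_m)=0$ shows the leak stays identically zero. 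Your approach avoids any analysis of what collapsing share values does to the scheme and reduces the whole thing to two lines of entropy bookkeeping; the only small point to note (which you handle implicitly) is that a random variable of any prescribed entropy $S_m-N_m>0$ exists, so the padding can always be sized exactly.
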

\begin{proof}
An access structure $\Gamma$ is induced by a matroid
$M=(\mathcal{Q},\mathcal{C})$ through $s\in\mathcal{Q}$ if $\Gamma$ is defined
on the set of participants $\mathcal{P}=\mathcal{Q}\setminus\{s\}$ by the upper
closure of the collection of subsets $A\subseteq\mathcal{P}$  such that
$A\cup\{s\} \in \mathcal{C}$ (here $\mathcal{C}$ is the set of circuits of the
matroid $\mathcal{M}$.) Let $\mathcal{F}$ and $\mathcal{F}^-$ be respectively
the access structures induced by the Fano and by the non-Fano matroids (through
any point). In \cite{MatusMatrRepr}, Mat\'{u}\v{s} proved that there exist
perfect ideal schemes for $\mathcal{F}$, resp. $\mathcal{F}^-$ if and only if
$\#\mathcal{K}$ is even, resp. odd.

Consider an access structure $\Gamma$ consisting of disjoint copies of
$\mathcal{F}$ and $\mathcal{F}^-$. From Mat\'{u}\v{s} argument, $\Gamma$
cannot be implemented ideally by a perfect scheme. Construct a scheme $\Sigma$
consisting of the concatenation of two independent schemes:
\begin{itemize}

\item a $\PS{\mathcal{F}}{N}{N}$, and

\item a $\PS{\mathcal{F}^-}{N}{M}$, constructed from a
$\PS{\mathcal{F}^-}{M}{M}$ for $\#\mathcal{K}= 2^N+1$ (i.e., $M=\log(2^N+1)$)
where we removed one possible value of the secret. 

\end{itemize}

$\Sigma$ is a perfect scheme for $\Gamma$ with rate
$\frac{N}{\log(2^{N}+1)}$. Now instead of using a $\PS{\mathcal{F}^-}{N}{M}$
as second scheme, we modify it into a nonperfect scheme by substituting the
value of the share "$2^N+1$" by any other possible value. Now there are exactly
$2^N$ shares. It is not difficult to show that $\Sigma'$ is, at most, a
$\NPS{\Gamma}{N}{N}{(\frac3{N},0)}$ i.e., with information rate exactly
one.
\end{proof}

\section{Kolmogorov secret sharing}
\label{sec:KC}

We denote "the" Kolmogorov complexity function by the letter $K$. Since most
variants are equal up to a logarithmic term and our results are asymptotic. For
a complete introduction to Kolmogorov complexity and to some techniques used
here, we refer the reader to the book \cite{livitanyi} and to \cite{ZL}. 

The problem of secret sharing could be studied also in the framework of the
algorithmic information theory. The idea is that now a secret sharing scheme is
not a distribution on binary strings  but an individual tuple of binary strings
with corresponding properties of ``secrecy''. To define these ``secrecy''
properties for individual strings, we substitute Shannon's entropy by
Kolmogorov complexity and get algorithmic counterparts of the definition of
secret sharing schemes. A similar idea was realized in Definition~21~(part 1)
in \cite{ALPS} for a special case (for threshold access structures).
 
For Kolmogorov complexity there is no natural way to define an "absolutely"
perfect version of secret sharing scheme. Thus, in the framework of Kolmogorov
complexity we can deal only with ``approximately-perfect'' versions of the
definition. We define approximately-perfect secret sharing schemes for
Kolmogorov complexity just in the same way as we defined
$(\varepsilon_1,\varepsilon_2)$-nonperfect schemes for Shannon's entropy
(similarly to Definition~\ref{def-approx-e1-e2}):

\begin{de}
\label{def-kolmogorov-sss}
For an access structure $\Gamma$ we say that a tuple of binary strings
$(s,a_1,\ldots,a_n)$ is a Kolmogorov $(\varepsilon_1,\varepsilon_2)$-perfect
secret sharing scheme for secrets of size $N$ if
\begin{itemize}
\item $K(s) = N$
\item for $A\in\Gamma, K(s|a_A)\le\varepsilon_1N$
\item for $B\notin\Gamma, K(s)-K(s|a_B) = I(s:a_B)\le\varepsilon_2N$ 
\end{itemize}
\end{de}

We reuse the template of almost-perfect secret sharing, this time in the
Kolmogorov setting using the above version of secret sharing scheme. Thus, it
should make sense to talk about almost-perfect secret sharing in the sense of
Kolmogorov. 

It turns out that problems of constructing approximately perfect secret sharing
schemes in Shannon's and Kolmogorov's frameworks are closely related. For every
access structure, in both frameworks the asymptotically optimal rates  are
equal to each other.  More precisely, we have the following equivalence:

\begin{theo} \label{th-sh-k}
Let $\Gamma$ be an access structure over $n$ participants and $\rho$ be a 
positive real, then the following are equivalent:
\begin{itemize}

\item $\Gamma$ can be almost-perfectly implemented with parameters $(\rho,\varepsilon_1,\varepsilon_2)$ in the
sense of Shannon.

\item $\Gamma$ can be almost-perfectly implemented with parameters $(\rho,\varepsilon_1,\varepsilon_2)$ in the
sense of Kolmogorov.

\end{itemize}
\end{theo}

This theorem follows from a more general parallelism between Shannon entropy
and Kolmogorov complexity. Below we explain this parallelism in terms of
realizable complexity and entropy profiles.

The Kolmogorov complexity profile of a tuple $[a] = (a_1,\ldots,a_n)$ of a
binary string is defined by the vector $\vec{K}([a])$ of Kolmogorov
complexities of all pairs, triples \ldots of strings $a_i$. So, it
consists consists of $2^n-1$ (integer) complexity values, one for each
non-empty subset of $n$ strings $a_i$. In the same way we define the entropy
profile $\vec{H}([s])$ of a tuple $[s]=(s_1,\ldots,s_n)$ of random variables by
replacing $K(\cdot)$ by $H(\cdot)$.

Next theorem explains that the class of realizable complexity profiles and 
the class of entropy profiles are in some sense very similar:

\begin{theo}\label{th-eq}
For every $\vec{v}\in \R_+^{2^n-1}$ the following conditions are equivalent:
\begin{itemize}
\item there is a sequence $([s_m])_{m\in\N}$ of $n$-tuple of random variables
s.t. $\frac1{m}\vec{H}([s_m])\to \vec{v}$
\item there is a sequence $([a_m])_{m\in\N}$ of $n$-tuple of binary strings
s.t. $\frac1{m}\vec{K}([a_m])\to \vec{v}$
\end{itemize}  
\end{theo}
\noindent
Note that Theorem~\ref{th-sh-k} follows immediately from Theorem~\ref{th-eq}.

We denote "the" Kolmogorov complexity function by the letter $K$. Since most
variants are equal up to a logarithmic term and our results are asymptotic. For
a complete introduction to Kolmogorov complexity and to some techniques used
here, we refer the reader to the book \cite{livitanyi} and to \cite{ZL}. 

The problem of secret sharing could be studied also in the framework of the
algorithmic information theory. The idea is that now a secret sharing scheme is
not a distribution on binary strings  but an individual tuple of binary strings
with corresponding properties of ``secrecy''. To define these ``secrecy''
properties for individual strings, we substitute Shannon's entropy by
Kolmogorov complexity and get algorithmic counterparts of the definition of
secret sharing schemes. A similar idea was realized in Definition~21~(part 1)
in \cite{ALPS} for a special case (for threshold access structures).
 
For Kolmogorov complexity there is no natural way to define an "absolutely"
perfect version of secret sharing scheme. Thus, in the framework of Kolmogorov
complexity we can deal only with ``approximately-perfect'' versions of the
definition. We define approximately-perfect secret sharing schemes for
Kolmogorov complexity just in the same way as we defined
$(\varepsilon_1,\varepsilon_2)$-nonperfect schemes for Shannon's entropy
(similarly to Definition~\ref{def-approx-e1-e2}):

\begin{de}
\label{def-kolmogorov-sss}
For an access structure $\Gamma$ we say that a tuple of binary strings
$(s,a_1,\ldots,a_n)$ is a Kolmogorov $(\varepsilon_1,\varepsilon_2)$-perfect
secret sharing scheme for secrets of size $N$ if
\begin{itemize}
\item $K(s) = N$
\item for $A\in\Gamma, K(s|a_A)\le\varepsilon_1N$
\item for $B\notin\Gamma, K(s)-K(s|a_B) = I(s:a_B)\le\varepsilon_2N$ 
\end{itemize}
\end{de}

We reuse the template of almost-perfect secret sharing, this time in the
Kolmogorov setting using the above version of secret sharing scheme. Thus, it
should make sense to talk about almost-perfect secret sharing in the sense of
Kolmogorov. 

It turns out that problems of constructing approximately perfect secret sharing
schemes in Shannon's and Kolmogorov's frameworks are closely related. For every
access structure, in both frameworks the asymptotically optimal rates  are
equal to each other.  More precisely, we have the following equivalence:

\begin{theo} \label{th-sh-k}
Let $\Gamma$ be an access structure over $n$ participants and $\rho$ be a 
positive real, then the following are equivalent:
\begin{itemize}

\item $\Gamma$ can be almost-perfectly implemented with parameters $(\rho,\varepsilon_1,\varepsilon_2)$ in the
sense of Shannon.

\item $\Gamma$ can be almost-perfectly implemented with parameters $(\rho,\varepsilon_1,\varepsilon_2)$ in the
sense of Kolmogorov.

\end{itemize}
\end{theo}

This theorem follows from a more general parallelism between Shannon entropy
and Kolmogorov complexity. Below we explain this parallelism in terms of
realizable complexity and entropy profiles.

The Kolmogorov complexity profile of a tuple $[a] = (a_1,\ldots,a_n)$ of a
binary string is defined by the vector $\vec{K}([a])$ of Kolmogorov
complexities of all pairs, triples \ldots of strings $a_i$. So, it
consists consists of $2^n-1$ (integer) complexity values, one for each
non-empty subset of $n$ strings $a_i$. In the same way we define the entropy
profile $\vec{H}([s])$ of a tuple $[s]=(s_1,\ldots,s_n)$ of random variables by
replacing $K(\cdot)$ by $H(\cdot)$.

Next theorem explains that the class of realizable complexity profiles and 
the class of entropy profiles are in some sense very similar:

\begin{theo}\label{th-eq}
For every $\vec{v}\in \R_+^{2^n-1}$ the following conditions are equivalent:
\begin{itemize}
\item there is a sequence $([s_m])_{m\in\N}$ of $n$-tuple of random variables
s.t. $\frac1{m}\vec{H}([s_m])\to \vec{v}$
\item there is a sequence $([a_m])_{m\in\N}$ of $n$-tuple of binary strings
s.t. $\frac1{m}\vec{K}([a_m])\to \vec{v}$
\end{itemize}  
\end{theo}
\noindent
Note that Theorem~\ref{th-sh-k} follows immediately from Theorem~\ref{th-eq}.

\begin{proof} 
To prove this result, we convert a sequence of $n$-tuple of random variables
into a sequence of $n$-tuple of binary strings and visa-versa; these
conversions will preserve complexity/entropy profiles: corresponding tuples of
random variables and  strings will have similar values in their profiles.

The main technical tools are the Kolmogorov--Levin theorem
 $$K(a,b) = K(a) + K(b|a) + O(\log|ab|)$$
and  the ``typization'' trick for entropy and Kolmogorov complexity
(the same technique as in \cite{HRSV,RomashMIPair}).  

[Kolmogorov $\ra$ Shannon]
Let $[a] = (a_1,\ldots,a_n)$ be an $n$-tuple of binary strings.
For a non-negative integer $c$ (to be fixed below) we consider the following set: 
$$
T_c([a])=\left\{[a']=(a_1',\ldots,a'_n):\forall U\subseteq[1,\ldots,n],
K(a_U)-c\log|a|\le K(a'_U)\le K(a_U)\right\},
$$
which is the set of $n$-tuples of binary strings whose complexity profile is
close to the one of $[a]$ up to a logarithmic term. Further we formulate
several properties of  $T_c([a])$.

\begin{claim}\label{claim-1}
$\log \#T_c([a]) = 2^{K(a)-O(\log K(a))}$ for all large enough $c$.
\end{claim}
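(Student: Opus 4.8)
Throughout write $N=K(a)$ for $K(\langle a_1,\dots,a_n\rangle)$ and $k_U=K(a_U)$ for nonempty $U\subseteq[n]$; since $n$ is a fixed constant, a bounded number of additive $O(\log N)$ (and $O(1)$) error terms still sums to $O(\log N)$, and I use this freely. The assertion is that $\log\#T_c([a])=N-O(\log N)$ from below and $\le N+O(1)$ from above. The upper bound needs no hypothesis on $c$: every $[a']\in T_c([a])$ satisfies, for $U=[n]$, $K(a'_{[n]})\le k_{[n]}=N$; a tuple is coded by a single string up to an $O(1)$ term, and there are fewer than $2^{N+1}$ strings of complexity at most $N$, so $\#T_c([a])<2^{N+1}$.

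For the lower bound the tempting shortcut is to enumerate $S=\{[b]:K(b_U)\le k_U\ \text{for all }U\}$: it is enumerable once we are handed the $2^n-1$ values of $\vec K([a])$ (only $O(\log N)$ bits of advice, using upper semicomputability of $K$), and it contains $[a]$, so incompressibility of $[a]$ gives $\#S\ge 2^{N-O(\log N)}$. This is not enough by itself, because $S$ may be swamped by ``degenerate'' tuples in which some $b_U$ has complexity far below $k_U$ (e.g.\ a coordinate replaced by the empty string), so one cannot conclude that the lower inequalities $K(b_U)\ge k_U-c\log|a|$ survive on a non-negligible part of $S$. Instead I build the tuples one coordinate at a time along the chain rule. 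Fix the order $1,\dots,n$ and put $k_i=K(a_i\mid a_1\cdots a_{i-1})$, so that $k_1+\dots+k_n=N+O(\log N)$ by Kolmogorov--Levin. By induction on $i$ produce $b_1,\dots,b_i$ whose profile is $O(\log N)$-close (on both sides) to $\vec K(a_1,\dots,a_i)$ and which, moreover, form a ``generic'' prefix (a conditional incompressibility property carried along in the induction hypothesis). The quantitative heart is: given such a prefix, the number of good choices for $b_i$ lies between $2^{k_i-O(\log N)}$ and $2^{k_i+O(\log N)}$; multiplying over $i=1,\dots,n$ yields $\#T_c([a])\ge 2^{k_1+\dots+k_n-O(\log N)}=2^{N-O(\log N)}$.

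For the inductive step one notes, via Kolmogorov--Levin, that the profile conditions on $(b_1,\dots,b_i)$ that involve the new coordinate are equivalent up to $O(\log N)$ to the family $K(b_i\mid b_W)=K(a_i\mid a_W)\pm O(\log N)$ over $W\subseteq[i-1]$ (because $K(b_W)=k_W\pm O(\log N)$ already holds and $k_{W\cup\{i\}}=k_W+K(a_i\mid a_W)\pm O(\log N)$ for the true tuple). The set of $b_i$ meeting the \emph{upper} halves $K(b_i\mid b_W)\le K(a_i\mid a_W)+O(\log N)$ is enumerable relative to the prefix together with the $O(\log N)$ bits $\{K(a_i\mid a_W)\}_W$; its cardinality is $2^{k_i\pm O(\log N)}$ (upper bound by counting, lower bound by a conditional incompressibility argument that uses genericity of the prefix), and only a vanishing fraction of it can violate one of the $2^{i-1}$ lower halves once $c$ is large enough, by the usual counting of strings of small conditional complexity. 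This is precisely the conditional form of the ``typization'' trick of \cite{HRSV,RomashMIPair}.

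\textbf{Main obstacle.} The hard part is the simultaneous control of all $2^n-1$ profile entries, and in particular the \emph{lower} bounds $K(b_U)\ge k_U-c\log|a|$ for subsets $U$ that mix already-fixed and newly-chosen coordinates: these are $\Pi_1$ constraints and cannot be folded into the enumerable sets we are counting, so one is forced to show instead that they fail on a negligible fraction of the enumerated family — which is what makes it necessary to keep the prefix generic along the induction and to choose $c$ large as a function of $n$ (also absorbing the $O(1)$'s of Kolmogorov--Levin). Tracking the accumulation of the $O(\log N)$ errors through the $n$ stages is the remaining, routine-for-fixed-$n$, bookkeeping.
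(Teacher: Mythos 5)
First, note that the paper does not actually prove this claim: it is delegated to Lemma~2 of \cite{HRSV} and Proposition~1 of \cite{RomashMIPair}. Your upper bound is fine, and your diagnosis of the real difficulty (the lower bounds $K(a'_U)\ge K(a_U)-c\log|a|$ are $\Pi_1$ constraints that cannot be folded into an enumerable set) is exactly right. The gap is in how you dispose of them. You claim that, within the enumerable candidate set of size $2^{k_i\pm O(\log N)}$ (where $k_i=K(a_i\mid a_{[i-1]})$), ``only a vanishing fraction can violate one of the lower halves, by the usual counting of strings of small conditional complexity.'' That counting does not give what you need: the number of $b_i$ with $K(b_i\mid b_W)<K(a_i\mid a_W)-m$ is about $2^{K(a_i\mid a_W)-m}$, and since $W\subseteq[i-1]$ one has $K(a_i\mid a_W)-k_i\approx I(a_i:a_{[i-1]\setminus W}\mid a_W)$, which can be of order $N$. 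So for $m=O(\log N)$ the bound on the violators can exceed the size of the whole candidate pool by an exponential factor, and no choice of $c$ depending only on $n$ repairs this. (The same problem already kills the even more naive plan of bounding, inside the set $\{[b]:K(b_U)\le k_U\ \forall U\}$, the tuples with $K(b_U)$ too small: that count is about $2^{k_U-m+k_{\overline{U}}}$, which exceeds $2^{K(a)}$ whenever $I(a_U:a_{\overline U})$ is large.)

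The standard fix --- and essentially what \cite{HRSV,RomashMIPair} do --- is not to count the violators of each lower half at all, but to \emph{derive} every lower bound from a single one via Kolmogorov--Levin. Define the enumerable set $S$ by upper bounds on the \emph{conditional} complexities as well: $K(b_U)\le K(a_U)$ and $K(b_{[n]\setminus U}\mid b_U)\le K(a_{[n]\setminus U}\mid a_U)+O(\log|a|)$ for all $U$. This $S$ is still enumerable given the $O(\log K(a))$ bits of the profile, contains $[a]$, hence has at least $2^{K(a)-O(\log K(a))}$ elements; at most $2^{K(a)-m+1}$ of them have $K(b_{[n]})<K(a)-m$ (this is the only place where direct counting is used, and here it works because the bound $2^{K(a)-m}$ is compared to the pool of size $2^{K(a)-O(\log K(a))}$). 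For every remaining $[b]$, Kolmogorov--Levin gives $K(b_U)\ge K(b_{[n]})-K(b_{[n]\setminus U}\mid b_U)-O(\log|a|)\ge K(a)-K(a_{[n]\setminus U}\mid a_U)-O(\log|a|)=K(a_U)-O(\log|a|)$, i.e.\ all lower halves hold simultaneously, and such $[b]$ lie in $T_c([a])$ for $c$ large enough as a function of $n$. This one-shot argument also makes your coordinate-by-coordinate induction and the ``generic prefix'' invariant unnecessary (that invariant is itself delicate: to lower-bound the number of admissible $b_i$ over a prefix $b_{[i-1]}\ne a_{[i-1]}$ you cannot simply point at $a_i$). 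So the architecture you describe can be completed, but not by the counting step you invoke; the chain-rule derivation of the lower bounds is the missing idea.
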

\begin{proof}
See Lemma~2 in~\cite{HRSV} and Proposition~1 in~\cite{RomashMIPair}.  We fix
value $c$ so that Claim~\ref{claim-1} holds ($c$ depends on the size $n$ of the
tuple but not on $K(a)$).
\end{proof}

\begin{claim}
\label{claim-conditionalK}
$\forall a'\in T_c(a),\forall U,V\subseteq[1,\ldots,n], K(a'_U|a'_V) =
K(a_U|a_V)-O(\log|a|) $
\end{claim}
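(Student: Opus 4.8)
The plan is to prove Claim~\ref{claim-conditionalK} purely from the complexity-profile bound that defines membership in $T_c([a])$, combined with the Kolmogorov--Levin chain rule. The key observation is that a conditional complexity like $K(a'_U \mid a'_V)$ can be written, up to a logarithmic error, as a difference of two unconditional complexities, namely $K(a'_{U\cup V}) - K(a'_V)$, and both of those quantities are pinned down by the definition of $T_c([a])$ to within $c\log|a|$ of the corresponding values for $[a]$.

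First I would recall the Kolmogorov--Levin theorem in the form $K(x,y) = K(x) + K(y\mid x) + O(\log|xy|)$, which rearranges to $K(y\mid x) = K(x,y) - K(x) + O(\log|xy|)$. Applying this with $x = a'_V$ and $y = a'_U$ (so that $K(a'_U, a'_V) = K(a'_{U\cup V})$ up to the obvious $O(\log|a|)$ cost of reordering and deduplicating coordinates), I get
$$
K(a'_U \mid a'_V) = K(a'_{U\cup V}) - K(a'_V) + O(\log|a|).
$$
Now I invoke the defining inequalities of $T_c([a])$ twice: once for the subset $U\cup V$ and once for the subset $V$. These give $K(a'_{U\cup V}) = K(a_{U\cup V}) + O(\log|a|)$ and $K(a'_V) = K(a_V) + O(\log|a|)$, where the implicit constants absorb the fixed $c$. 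Substituting and then running Kolmogorov--Levin in the reverse direction on $[a]$ itself yields
$$
K(a'_U \mid a'_V) = K(a_{U\cup V}) - K(a_V) + O(\log|a|) = K(a_U \mid a_V) + O(\log|a|),
$$
which is exactly the claimed equality.

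The only subtlety — and the step I expect to need the most care — is bookkeeping the logarithmic error terms and making sure the $O(\log|a|)$ notation is legitimate. There are really two size scales floating around: $|a|$ (the bit-length of the strings) and $K(a)$ (their complexity), and since $K(a) \le |a| + O(1)$ while $|a|$ could be much larger than $K(a)$, one must be a little careful that the claim is stated with $O(\log|a|)$ and not $O(\log K(a))$. Also, the Kolmogorov--Levin error term formally involves $\log$ of the complexities of the strings being combined, and when $U$, $V$ overlap one must pass from the tuple $(a'_U, a'_V)$ to the deduplicated tuple $a'_{U\cup V}$; this costs only $O(\log|a|)$ because each $a'_i$ can be specified by its index among the $n$ coordinates plus its length. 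Since $n$ is fixed and all the strings have length at most $|a|$ (and complexity at most $|a| + O(1)$), every error term that arises is genuinely $O(\log|a|)$, with the constant depending only on $n$ and the fixed choice of $c$ from Claim~\ref{claim-1}. Finally, note the same chain of reasoning, specialised to $V = \emptyset$, recovers the unconditional bound $K(a'_U) = K(a_U) + O(\log|a|)$ that one would expect from the $T_c$ definition, which serves as a useful sanity check.
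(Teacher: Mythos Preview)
Your argument is correct and is exactly the intended one: the paper's own proof is the one-line remark ``Follows from the definition of $T_c(a)$ and the Kolmogorov--Levin theorem,'' and what you have written simply unpacks that remark by expressing $K(a'_U\mid a'_V)$ as $K(a'_{U\cup V})-K(a'_V)+O(\log|a|)$, applying the $T_c$ bounds to each term, and recombining via Kolmogorov--Levin on $[a]$. Your extra care about the passage from $(a'_U,a'_V)$ to $a'_{U\cup V}$ and about the size of the logarithmic error is appropriate but does not change the substance.
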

\begin{proof}
Follows from the definition of $T_c(a)$ and the Kolmogorov--Levin theorem.
\end{proof}

Now, define $[s]=(s_1,\ldots,s_n)$ as an $n$-tuple of random variables
uniformly distributed on $T_c([a])$.  From the definition of $[s]$ and
Claim~\ref{claim-1} it follows that entropy of all $[s]$ is close to $K(a)$. We
claim that in fact all components of the entropy profile of $[s]$ are close to
the corresponding components in the complexity profile of $[a]$. We prove this
property in two steps. At first, we obtain /Can upper bound:

\begin{claim}
$\forall U\subseteq[1,\ldots,n], H(s_U) \le K(a_U) +1$ 
\end{claim}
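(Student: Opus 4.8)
The plan is to bound $H(s_U)$ by the logarithm of the cardinality of the range of $s_U$, and then to bound that range using only the \emph{upper} half of the inequality defining $T_c([a])$. Concretely, since $[s]$ is uniformly distributed on $T_c([a])$, the projection $s_U$ is a random variable (in general not uniform) supported on the finite set
$$
R_U = \{\, a'_U : [a']\in T_c([a]) \,\}.
$$
For any random variable taking values in a finite set we have $H(s_U)\le \log\#R_U$, so it suffices to show $\#R_U < 2^{K(a_U)+1}$.

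Now, by the very definition of $T_c([a])$, every tuple $[a']\in T_c([a])$ satisfies $K(a'_U)\le K(a_U)$; hence every string $w\in R_U$ has $K(w)\le K(a_U)$. A standard counting argument --- the number of programs of length at most $m$ is $2^0+2^1+\dots+2^m=2^{m+1}-1$ --- shows that there are fewer than $2^{m+1}$ binary strings of Kolmogorov complexity at most $m$. Taking $m=K(a_U)$ gives $\#R_U<2^{K(a_U)+1}$, and therefore
$$
H(s_U)\le \log\#R_U < K(a_U)+1,
$$
which proves the claim (with room to spare, so the additive $1$ is comfortable regardless of the exact complexity convention).

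There is essentially no obstacle here: the argument is a one-line counting estimate, and the only point worth flagging is that this direction uses \emph{exclusively} the constraint $K(a'_U)\le K(a_U)$ baked into $T_c([a])$. The lower bound $K(a_U)-c\log|a|\le K(a'_U)$ (together with Claim~\ref{claim-1}, which controls $\#T_c([a])$ from below) is what will be needed for the matching lower bound $H(s_U)\ge K(a_U)-O(\log|a|)$ in the next step, and Claim~\ref{claim-conditionalK} will then let one transfer these estimates to all conditional entropies and thus to the whole profile.
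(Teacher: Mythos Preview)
Your proof is correct and is essentially identical to the paper's own argument: bound $H(s_U)$ by the log of the size of its support, observe from the definition of $T_c([a])$ that every element of this support has Kolmogorov complexity at most $K(a_U)$, and count the strings of complexity at most $K(a_U)$ to get at most $2^{K(a_U)+1}-1$ of them. The additional commentary you give about which ingredients feed into the companion lower bound is accurate as well.
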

\begin{proof}
The number of possible values for $s_U$ is the number of possible substrings
$a'_U$ for $a'\in T(a)$. Since $K(a'_U)\le K(a_U)$, there is at most
$2^{K(a_U)+1}-1$ such values for $s_U$. Shannon's entropy of a random variable
cannot be greater than logarithm of the number of its values, and we are done.
\end{proof}

Further, we prove the lower bound:
\begin{claim}
$\forall U\subseteq[1,\ldots,n], H(s_U) \ge K(a_U) - O(\log|a|)$
\end{claim}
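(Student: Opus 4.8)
The plan is to show $H(s_U)\ge K(a_U)-O(\log|a|)$ by exhibiting a short algorithmic description of any $a'_U$ in terms of its own value together with a bounded amount of side information, which forces the number of distinct values of $s_U$ to be at least $2^{K(a_U)-O(\log|a|)}$; since $s$ is uniform on $T_c([a])$, the marginal $s_U$ is nearly uniform on its support and its entropy is close to the logarithm of the support size. Concretely, I would argue as follows. Fix $U$. The previous claim already gives $H(s_U)\le K(a_U)+1$, and combined with Claim~\ref{claim-1} and the Kolmogorov--Levin theorem one knows $H(s)=K(a)-O(\log|a|)$ and, for the complementary coordinates, $H(s_{\bar U}\mid s_U)\le K(a_{\bar U}\mid a_U)+O(\log|a|)$ (using the upper bound on $H(s_{\bar U})$-type quantities and subadditivity, or directly bounding the number of completions). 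Then the entropy chain rule $H(s_U)=H(s)-H(s_{\bar U}\mid s_U)$ yields
$$
H(s_U)\ \ge\ K(a)-O(\log|a|)-K(a_{\bar U}\mid a_U)-O(\log|a|)\ =\ K(a_U)-O(\log|a|),
$$
where the last equality is the Kolmogorov--Levin theorem $K(a)=K(a_U)+K(a_{\bar U}\mid a_U)+O(\log|a|)$.

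The one genuine point to nail down is the bound $H(s_{\bar U}\mid s_U)\le K(a_{\bar U}\mid a_U)+O(\log|a|)$. I would get this by counting: for a fixed value $v$ of $s_U$ (i.e. a fixed substring $a'_U$ realized by some $a'\in T_c([a])$), every completion $a'_{\bar U}$ with $a'\in T_c([a])$ satisfies $K(a'_{\bar U}\mid a'_U)\le K(a'\mid a'_U)+O(\log|a|)\le K(a'_{\bar U},a'_U)-K(a'_U)+O(\log|a|)\le K(a_{\bar U},a_U)-K(a_U)+O(\log|a|)=K(a_{\bar U}\mid a_U)+O(\log|a|)$ by Claim~\ref{claim-conditionalK} and Kolmogorov--Levin. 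Hence, for each value $v$ of $s_U$, the number of possible values of $s_{\bar U}$ is at most $2^{K(a_{\bar U}\mid a_U)+O(\log|a|)}$, so $H(s_{\bar U}\mid s_U=v)\le K(a_{\bar U}\mid a_U)+O(\log|a|)$ for every $v$, and averaging over $v$ gives the claimed conditional-entropy bound. Plugging this into the chain rule as above finishes the proof.

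The main obstacle is really only bookkeeping: making sure all the $O(\log|a|)$ terms (which absorb the Kolmogorov--Levin error terms, the constant $c$, the extra bits needed to describe $U$ and the relevant complexity values as integers of size $O(|a|)$, etc.) are uniform in the tuple and do not secretly depend on $K(a)$ in a way that breaks the asymptotics when we later divide by $m$ and take limits. This is exactly the role played by the remark in Claim~\ref{claim-1} that $c$ depends only on $n$; with that in hand the constants behave, and the claim follows.
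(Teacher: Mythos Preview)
Your concrete argument is correct and is exactly the paper's proof: bound $H(s_{\bar U}\mid s_U)$ by counting completions via Claim~\ref{claim-conditionalK}, then apply the entropy chain rule $H(s_U)=H(s)-H(s_{\bar U}\mid s_U)$ together with $H(s)=K(a)-O(\log|a|)$ and the Kolmogorov--Levin theorem. The opening sentence of your plan (lower-bounding the support of $s_U$ and invoking near-uniformity of the marginal) is a different and unjustified route---a marginal of a uniform distribution need not be close to uniform---but you do not actually use it, so this is only a cosmetic mismatch between your stated plan and your executed proof.
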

\begin{proof}
First, consider $a'_U$ for some fixed $a'\in T(a)$. From Claim
\ref{claim-conditionalK},
$K(a'_{\overline{U}}|a'_U) \le K(a_{\overline{U}}|a_U) + O(\log|a|)$, thus
$s_U$ can take at most $2^{K(a_{\overline{U}}|a_U)+O(\log|a|)}$
values. This is true for all such $a'_U$, therefore
$H(s_{\overline{U}}|s_U) \le
K(a_{\overline{U}}|a_U)+O(\log|a|)$.

Then,
$$
\begin{array}{rclr}
H(\mathbf{s}_U) &=& H(\mathbf{s}) - H(\mathbf{s}_{\overline{U}}|\mathbf{s}_U)
& \text{(equality for entropy)}\\
                &\ge& K(a) - K(a_{\overline{U}}|a_U) - O(\log|a|) 
& \text{(by definition of $s$)}\\
                &\ge& K(a_U) - O(\log|a|) 
& \text{(from symmetry of information)}
\end{array}
$$
\end{proof}

Therefore, the random variable $[s]$ has an entropy profile close to the
complexity profile of $[a]$ up to a logarithmic factor. The first part for the
theorem is proven.

\medskip

[Shannon $\ra$ Kolmogorov]
Let $s = (s_1,\ldots,s_n)$ be a $n$-tuple of random variables.  We fix an
integer $M>0$ (to be specified below) and construct some $M\times n$ table
$$
\begin{array}{c}
a_1^1a_2^2 \ldots a_1^M\\
a_2^1a_2^2 \ldots a_2^M\\
\vdots\\
a_n^1a_n^2 \ldots a_n^M\\
\end{array}
$$
satisfying the following properties:
\begin{enumerate}[(a)]

\item The columns of the table  (each column is an $n$-vector) consist of
possible values for the random variable $[s]$.

\item Different $n$-tuples are used as columns in the matrix with different
frequencies; we require that each frequency is close to the corresponding
probability in the distribution of $[s]$. More precisely, for every $n$-tuple
of letters $(\alpha_1,\ldots,\alpha_n)$
$$
\mbox{the column }
\left(
\begin{array}{c}
\alpha_1\\
\alpha_2\\
\vdots\\
\alpha_n\\
\end{array}
\right)
\mbox{ should occur in the table }\ \mathbf{Prob}[s = (\alpha_1,\ldots,\alpha_n)] \cdot M + O(1)
\mbox{ times}.
$$

\item The table has the maximal Kolmogorov complexity among all tables
satisfying (a) and (b). It implies, by a rather simple counting argument, that
$$K(a)\ge M\cdot H(s)-O(\log M)$$
\end{enumerate}

Denote $a_i = a_i^1\ldots a_i^M$ for all $i=1\ldots n$ (i.e., we set $a_i$ to be
the row $i$ of the table.) Let us verify that the $n$-tuple of
binary strings $a = (a_1,\ldots,a_n)$ has a complexity profile close to 
the entropy profile of $s$ multiplied by $M$.
\begin{claim}
$\forall U\subseteq[1,\ldots,n], 
K(a_U) \le M\cdot H(s_U) + O(\log M)$
\end{claim}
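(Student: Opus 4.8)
The plan is to bound $K(a_U)$ by encoding the submatrix $a_U$ directly from the column statistics guaranteed by property~(b). Notably this uses only properties~(a) and~(b) of the table, not~(c); property~(c) is what will be needed for the matching lower bound $K(a_U)\ge M\cdot H(s_U)-O(\log M)$.

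First I would note that $a_U$ is the $|U|\times M$ submatrix obtained by keeping only the rows indexed by $U$, so each of its $M$ columns is a possible value of the random variable $s_U$. By property~(b), for a fixed value $\beta$ of $s_U$ the number of columns of $a_U$ equal to $\beta$ is
\[
n_\beta=\sum_{\alpha:\ \alpha_U=\beta}\bigl(\mathbf{Prob}[s=\alpha]\cdot M+O(1)\bigr)=\mathbf{Prob}[s_U=\beta]\cdot M+O(1),
\]
the new $O(1)$ being bounded by the (constant) number of $n$-tuples $\alpha$ with $\alpha_U=\beta$. Writing $\widehat p_U(\beta)=n_\beta/M$ for the empirical distribution of the columns of $a_U$, this gives $\|\widehat p_U-p_{s_U}\|_1=O(1/M)$, since there are only $O(1)$ values $\beta$.

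Next I would describe $a_U$ succinctly. It suffices to record $M$ together with the integer vector $(n_\beta)_\beta$ — which costs $O(\log M)$ bits, as there are only $O(1)$ coordinates, each at most $M$ — followed by the rank of $a_U$ in a fixed canonical enumeration of all length-$M$ words over the alphabet of $s_U$ having exactly these letter counts. There are $\binom{M}{(n_\beta)_\beta}$ such words, so this rank costs $\log\binom{M}{(n_\beta)_\beta}+O(1)$ bits, and a fixed decoder reconstructs $a_U$ from the data. Hence
\[
K(a_U)\le \log\binom{M}{(n_\beta)_\beta}+O(\log M)\le M\cdot H(\widehat p_U)+O(\log M),
\]
the last step being the standard multinomial estimate $\binom{M}{(n_\beta)_\beta}\le 2^{M H(\widehat p_U)}$.

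It then remains to replace $H(\widehat p_U)$ by $H(s_U)$, and this is the step I would treat as the main (though mild) obstacle: I would invoke uniform continuity of the entropy functional. Since $\widehat p_U$ is within $O(1/M)$ of the distribution of $s_U$ in $\ell_1$ norm and the alphabet has constant size, $|H(\widehat p_U)-H(s_U)|=O(M^{-1}\log M)$, so $M\cdot H(\widehat p_U)=M\cdot H(s_U)+O(\log M)$ and the claim follows. The delicate point is precisely this continuity estimate near the boundary of the probability simplex, where some $\mathbf{Prob}[s_U=\beta]$ may be tiny or zero; one controls it with the usual bound $|H(p)-H(q)|=O(\|p-q\|_1\log(1/\|p-q\|_1))$ valid for $\|p-q\|_1$ small. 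Everything else is routine bookkeeping of $O(\log M)$ terms.
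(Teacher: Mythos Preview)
Your argument is correct and follows essentially the same route as the paper: both encode $a_U$ via the empirical column frequencies (the paper cites a ready-made bound from~\cite{ZL} for $K(a_U)\le M\cdot h+O(\log M)$, while you spell out the multinomial/type encoding explicitly), and both then pass from the empirical entropy to $H(s_U)$ using that the frequencies are within $O(1/M)$ of the true probabilities. Your continuity bound $|H(p)-H(q)|=O(\|p-q\|_1\log(1/\|p-q\|_1))$ is in fact a bit more careful than the paper's informal expansion near possibly-zero probabilities, but the net outcome is the same $O(\log M)$ overhead.
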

\begin{proof} 
We extract from the entire table the rows corresponding to $U$; count
frequencies of different columns (of size $|U|$) that occur in this restricted
table (of size $|U|\times M$).  Denote these frequencies  by $f_1,f_2,\ldots $
(of course, the sum of all frequencies equals $1$).  Let $h$ be the entropy of
the distribution with probabilities $f_1,f_2,\ldots $.  By Theorem~5.1 in
\cite{ZL}, 
$$
K(a_U) \le M\cdot h + O(\log M).
$$
Further, we use the fact that frequencies $f_j$ are close to the corresponding
probabilities of $s_u$:
$$
\begin{array}{rclr}
h &=& -\sum_i{f_i\log f_i}
& \text{}\\ 
         &=& -\sum_i{(p_i + O(\frac1{M}))\log(p_i[1+O(\frac1{p_iM})])}
& \text{}\\
         &\le& H(s_U) + O(\frac1{M})  
& \text{}\\
\end{array}
$$
We get the claim by combining the two inequalities.
\end{proof}
\begin{claim}
$\forall U,V\subseteq[1,\ldots,n], 
 K(a_U|a_V) \le M\cdot H(s_U|s_V) + O(\log M)$
\end{claim}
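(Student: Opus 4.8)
The plan is to reuse the ``typization'' idea of the previous claim, but now \emph{conditionally}: group the $M$ columns of the table according to the pattern they exhibit on the rows indexed by $V$, and then, separately inside each group, encode the patterns appearing on the rows indexed by $U$. Since $a_V$ already determines the $V$-pattern of every column, the grouping itself is free; what has to be paid for is only the description of the $U$-patterns within each group.

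Concretely I would proceed as follows. First reduce to the case $U\cap V=\emptyset$, which costs nothing: from $a_V$, describing $a_U$ is the same as describing $a_{U\setminus V}$, and $H(s_U|s_V)=H(s_{U\setminus V}|s_V)$. Next, for each value $\beta$ of the $V$-pattern let $J_\beta\subseteq\{1,\dots,M\}$ be the set of columns with $V$-pattern $\beta$; by property~(b), $|J_\beta|=\mathbf{Prob}[s_V=\beta]\cdot M+O(1)$, and the sets $J_\beta$ are computable from $a_V$. Inside the block $J_\beta$ the $U$-patterns form a string of length $|J_\beta|$ over a fixed finite alphabet whose empirical frequencies are, again by~(b), within $O(1/M)$ of the conditional probabilities $\mathbf{Prob}[s_U=\cdot\mid s_V=\beta]$. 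By Theorem~5.1 in \cite{ZL}, this block-string is describable, given $a_V$ together with the $O(\log M)$ bits listing all the relevant frequencies and block sizes, by $|J_\beta|\cdot h_\beta+O(\log M)$ bits, where $h_\beta$ is the entropy of the empirical distribution in block $\beta$. Summing over the constantly many values $\beta$ and applying the same elementary estimate on $-\sum f_i\log f_i$ as in the previous claim gives $\sum_\beta |J_\beta|\,h_\beta\le M\sum_\beta \mathbf{Prob}[s_V=\beta]\,H(s_U|s_V=\beta)+O(1)=M\cdot H(s_U|s_V)+O(1)$, hence $K(a_U|a_V)\le M\cdot H(s_U|s_V)+O(\log M)$.

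An alternative route goes through the Kolmogorov--Levin theorem: $K(a_U|a_V)=K(a_{U\cup V})-K(a_V)+O(\log M)$; one upper-bounds the first term by the claim already proved, $K(a_{U\cup V})\le M\cdot H(s_{U\cup V})+O(\log M)=M\cdot H(s_V)+M\cdot H(s_U|s_V)+O(\log M)$. For this to close one needs the matching lower bound $K(a_V)\ge M\cdot H(s_V)-O(\log M)$, and supplying that is the main obstacle on this route. It is obtained by the counting argument that already justifies~(c): the number of tables satisfying (a) and (b) whose restriction to the rows indexed by $V$ equals $a_V$ is at most $2^{M\,H(s_{\overline V}\mid s_V)+O(\log M)}$, so $K(a)\le K(a_V)+M\,H(s_{\overline V}\mid s_V)+O(\log M)$, whence $K(a_V)\ge K(a)-M\,H(s_{\overline V}\mid s_V)-O(\log M)\ge M\,H(s_V)-O(\log M)$ by~(c). (This subtuple lower bound is in any case needed to conclude that $\tfrac1M\vec K([a])$ converges to $\vec v$, so it costs nothing extra.)

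On either route the only delicate point is the bookkeeping of the $O(\cdot)$ terms: the empirical frequencies inside a block $J_\beta$ are $O(1/M)$-close to the true conditional probabilities only when $\mathbf{Prob}[s_V=\beta]$ is bounded away from $0$, but since the distribution of $[s]$ is finite this is automatic (columns of probability $0$ never occur, by~(b)), and the standard continuity estimate for $x\log x$ then absorbs the discrepancy into $O(1/M)$ per block and $O(1)$ overall --- exactly as in the proof of the preceding claim.
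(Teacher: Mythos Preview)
Your main route is correct and is essentially the paper's own argument: partition the $M$ columns by their $V$-pattern, encode the $U$-patterns separately within each block via the type-counting bound (Theorem~5.1 in \cite{ZL}), and sum over the finitely many blocks. The paper is terser---it jumps straight to $\sum_j m_j\,H(s_U\mid s_V=\bar\alpha_j)$ without spelling out the empirical-versus-true-frequency comparison you make explicit---but the idea is identical; your alternative Kolmogorov--Levin route also works, though it is not the one the paper uses.
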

\begin{proof}
Denote $a_V = a^1_V\ldots a^M_V$. We split all positions $i=1\ldots M$ into
classes corresponding to different values  of $a^i_V$. Denote the sizes of
these classes by $m_1,m_1,\ldots$ By property (c) of the table, each $m_j$ must
be proportional to the corresponding probability: the number $m_j$ of positions
$i=1,\ldots,M$ such that $a_V^i=\bar \alpha_j$ is equal to
$$
\mathbf{Prob}[s_v = \bar \alpha_j ] \cdot M + O(1).
$$
Given $a_V$, we  describe $a_U$ by an encoding $a^i_U$ separately for different
classes of positions corresponding to different values of $a^i_V$. Similarly
to the previous Claim, we get
$$
K(a_U|a_V) \le \sum_j{[m_jH(s_U|s_V=\bar \alpha_j) + O(\log m_j)]}
$$
where $m_j$ is the number of columns $c$ of the table where $a^c_V=\bar
\alpha_j$. It follows that 
$$
K(a_U|a_V) \le 
M\sum_j{\frac{m_j}{M}H(s_U|s_V=\alpha_j)} + O(\log M)
= M\cdot H(s_U|s_V) + O(\log M)  
$$
\end{proof}
\begin{claim}
$\forall U,V\subseteq[1,\ldots,n], 
K(a_U|a_V) \ge M\cdot H(\mathbf{s}_U|\mathbf{s}_V) - O(\log|a|)$
\end{claim}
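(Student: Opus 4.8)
The plan is to deduce this lower bound from the \emph{global} lower bound on $K(a)$ supplied by property (c) of the table, combined with the conditional upper bounds already established, using the Kolmogorov--Levin theorem (symmetry of information) as the main tool. Throughout, note that since $n$ is fixed we have $|a| = nM$, so $O(\log M)$ and $O(\log|a|)$ are of the same order of magnitude and may be used interchangeably.

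First I would prove the auxiliary ``unconditional'' lower bound $K(a_W) \ge M\cdot H(s_W) - O(\log|a|)$ for every subset $W \subseteq [1,\ldots,n]$. By the Kolmogorov--Levin theorem applied to the pair $(a_W, a_{\overline{W}})$ (whose concatenation has the same complexity as $a$ up to $O(1)$, being just a reordering of rows),
$$
K(a_W) \ge K(a) - K(a_{\overline{W}}\mid a_W) - O(\log|a|).
$$
Property (c) supplies $K(a) \ge M\cdot H(s) - O(\log M)$, and the previous claim, instantiated with the roles $U := \overline{W}$ and $V := W$, supplies $K(a_{\overline{W}}\mid a_W) \le M\cdot H(s_{\overline{W}}\mid s_W) + O(\log M)$. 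Plugging both in and using the entropy chain rule $H(s) = H(s_W) + H(s_{\overline{W}}\mid s_W)$ yields the claimed auxiliary bound.

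Then I would finish by one more application of Kolmogorov--Levin, this time to $a_{U\cup V}$ and $a_V$ (using $K(a_U, a_V) = K(a_{U\cup V}) + O(\log|a|)$, since each of the two determines the other once the index sets $U$ and $V$ are given, which costs only $O(1)$ as $n$ is fixed):
$$
K(a_U\mid a_V) \ge K(a_{U\cup V}) - K(a_V) - O(\log|a|).
$$
Now bound $K(a_{U\cup V})$ from below by the auxiliary step and $K(a_V)$ from above by the earlier claim $K(a_V) \le M\cdot H(s_V) + O(\log M)$; the chain rule $H(s_{U\cup V}) - H(s_V) = H(s_U\mid s_V)$ then gives $K(a_U\mid a_V) \ge M\cdot H(s_U\mid s_V) - O(\log|a|)$, as required.

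The only thing needing care is the bookkeeping of the logarithmic terms: each use of Kolmogorov--Levin costs an additive $O(\log(\text{string length}))$, but every string in play is a row or a concatenation of rows of the $n\times M$ table with $n$ constant, so each such cost is $O(\log M) = O(\log|a|)$, and the finitely many of them collapse into a single $O(\log|a|)$ error. Beyond this there is no real obstacle: the argument is just symmetry of information glued to the two facts already in hand, namely that maximality of the table forces $K(a) \ge M\cdot H(s) - O(\log M)$, and that $K(a_U\mid a_V) \le M\cdot H(s_U\mid s_V) + O(\log M)$.
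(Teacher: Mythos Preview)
Your proof is correct and follows essentially the same route as the paper: Kolmogorov--Levin, property~(c) for the global lower bound on $K(a)$, the previous claim for the upper bound on $K(a_V)$, and the Shannon chain rule to finish. In fact your version is more careful than the paper's, which writes $K(a_U|a_V)=K(a)-K(a_V)-O(\log|a|)$ in its first line (literally correct only when $U\cup V=[n]$); your auxiliary step $K(a_W)\ge M\cdot H(s_W)-O(\log|a|)$ is exactly what makes the argument watertight for general $U,V$.
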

\begin{proof}$$
\begin{array}{rclr}
K(a_U|a_V) &=&   K(a) - K(a_V) - O(\log|a|) 
& \text{by Kolmogorov-Levin Theorem}\\ 
           &\ge& MH(\mathbf{s}) - MH(\mathbf{s}_V) - O(\log|a|) 
& \text{by (c) and previous claim}\\
           &\ge& MH(\mathbf{s}_U|\mathbf{s}_V) - O(\log|a|) 
& \text{Shannon information equality}\\
\end{array}
$$
\end{proof}
Thus, we have constructed a $n$-tuple of binary strings $[a]$ whose complexity
profile is close to $M$ times the entropy profile of $[s]$, up to some
logarithmic term.
\end{proof}

\section{Scaling of secret sharing schemes}
\label{sec:Scaling}

Here, we attempt to show how to scale up and down any secret sharing scheme.
The problem consist of, given a secret sharing for $N$-bit secrets,
constructing new secret sharing schemes for $\ell$-bit secrets where $\ell$ can
be arbitrary large or small. While this task is easy in the perfect case, it
becomes much more difficult in the non-perfect case when we are concerned with
efficiency and information leak.

\subsection{Scaling for perfect schemes}

We present some easy construction for scaling up and down in the perfect case and
state what they achieve in terms of efficiency (size of the shares).

\begin{prop}
\label{prop1}
Let $\Gamma$ be an access structure and $\Sigma$ be a $\PS{\Gamma}{N}{S}$ then

\textup(a\textup) \textup[scaling down\textup] For every positive integer
$\ell\le N$ there exists a \PS{\Gamma}{\ell}{S} 

\textup(b\textup) \textup[scaling up\textup] For every positive integer $q$
there exists a \PS{\Gamma}{qN}{qS} 
\end{prop}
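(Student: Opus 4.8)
The plan is to handle the two parts by two independent and essentially routine constructions: part~(b) by parallel repetition (take several mutually independent copies of $\Sigma$), and part~(a) by pushing the secret through a surjection while leaving the shares untouched. I do not expect a genuine obstacle here; the point to keep in mind is that part~(a) makes no attempt to shrink the shares, and this is precisely what keeps it trivial — it is also the reason the analogous statement in the non-perfect setting, where one additionally wants the shares to scale down, is the hard theorem proved later.

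For part~(b) I would take $q$ mutually independent copies of the given scheme, writing the $j$-th copy as $(\varkappa^{(j)},\sigma_1^{(j)},\dots,\sigma_n^{(j)})$ on a common product space, and set $\varkappa'=(\varkappa^{(1)},\dots,\varkappa^{(q)})$ and $\sigma'_p=(\sigma^{(1)}_p,\dots,\sigma^{(q)}_p)$. Since the copies are independent, entropies add, so $H(\varkappa')=qN$ and $H(\sigma'_p)\le qS$. For an authorized set $A$, each $\varkappa^{(j)}$ is recovered from $\sigma^{(j)}_A$ by the original scheme, so $\varkappa'$ is a function of $\sigma'_A$ and $H(\varkappa'|\sigma'_A)=0$. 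For a forbidden set $B$, the pairs $(\varkappa^{(j)},\sigma^{(j)}_B)$ are mutually independent, whence mutual information is additive over the copies and $I(\varkappa':\sigma'_B)=\sum_j I(\varkappa^{(j)}:\sigma^{(j)}_B)=0$. This yields a $\PS{\Gamma}{qN}{qS}$.

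For part~(a) I would keep the shares of $\Sigma$ unchanged and replace the $N$-bit secret by its projection onto the first $\ell$ coordinates: taking $f\colon\{0,1\}^N\to\{0,1\}^\ell$ to be this projection (surjective, carrying the uniform distribution to the uniform distribution on $\{0,1\}^\ell$), put $\varkappa'=f(\varkappa)$ and $\sigma'_p=\sigma_p$, all on the original probability space. Then $H(\varkappa')=\ell$ and $H(\sigma'_p)=H(\sigma_p)\le S$. An authorized set recovers $\varkappa$, hence $\varkappa'=f(\varkappa)$, so $H(\varkappa'|\sigma'_A)=0$; for a forbidden set $B$ the data-processing inequality applied to the function $f$ gives $I(\varkappa':\sigma_B)\le I(\varkappa:\sigma_B)=0$. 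Hence a $\PS{\Gamma}{\ell}{S}$. The only small care needed throughout is to observe that "forgetting" part of a jointly distributed tuple — marginalizing out $\varkappa_{\ell+1},\dots,\varkappa_N$, or restricting attention to one of several independent copies — again produces a legitimate secret-sharing tuple, and that the independence of the copies in part~(b) is exactly what makes the entropy and mutual-information bounds additive rather than merely subadditive.
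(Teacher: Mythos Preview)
Your proof is correct. Part~(b) is exactly the paper's argument: take $q$ independent copies and use additivity of entropy and mutual information.

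For part~(a) you take a slightly different route than the paper. The paper \emph{restricts} the secret space: it keeps the conditional distributions $\sigma\mid\varkappa=k$ from the original scheme but lets $\varkappa$ range uniformly over only $2^\ell$ of the original $2^N$ values; perfectness then follows because in a perfect scheme the distribution of $\sigma_B$ given $\varkappa=k$ does not depend on $k$ for forbidden $B$, so the same holds after restriction. You instead \emph{project} the secret: keep the shares exactly as they are and set $\varkappa'=f(\varkappa)$ for a surjection $f$, then invoke the data-processing inequality to kill the leak. Both arguments are one-liners. Your version has the mild advantage that the share distributions are literally unchanged, so the bound $H(\sigma_p)\le S$ is immediate without needing to appeal to the independence of $\sigma_p$ and $\varkappa$ for forbidden singletons.
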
 

\begin{proof}

\textup(a\textup) To scale down, we can reuse the same scheme. Simply restrict
the support of the random variable $k$ to $2^\ell$ values and equip this
support with the uniform distribution. Authorized groups can determine the
secret uniquely since it was the case in the initial scheme. Forbidden have no
information about the secret otherwise they had some information in the
initial perfect scheme. 

\textup(b\textup) For scaling up, the new scheme consists of the concatenation
of $q$ independent versions of the initial scheme. Since the new scheme
consists of independent copies (a serialization) of the initial scheme, 
every new entropy value is $q$ times the old entropy value.
\end{proof}

\subsection{Scaling for non-perfect schemes}

Scaling up for nonperfect schemes is similar to the case of perfect
schemes.

\begin{prop}
Let $\Gamma$ be an access structure and $\Sigma$ be a
$\NPS{\Gamma}{N}{S}{\varepsilon}$ then for every non-negative integer $q$ there 
exists a \NPS{\Gamma}{qN}{qS}{q\varepsilon} 
\end{prop}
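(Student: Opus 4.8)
The plan is to serialize $q$ independent copies of $\Sigma$, exactly as in the scaling-up construction for perfect schemes (Proposition~\ref{prop1}(b)). Concretely, I would work on the product probability space carrying $q$ mutually independent tuples $(\varkappa^{(i)},\sigma^{(i)}_1,\ldots,\sigma^{(i)}_n)$ for $i=1,\ldots,q$, each distributed like $\Sigma$, and declare the new secret to be $\varkappa=(\varkappa^{(1)},\ldots,\varkappa^{(q)})$ and the new share of participant $p$ to be $\sigma_p=(\sigma^{(1)}_p,\ldots,\sigma^{(q)}_p)$. The first step is to record, using additivity of Shannon entropy over independent variables, that $H(\varkappa)=\sum_i H(\varkappa^{(i)})=qN$ (so the new secret carries $qN$ bits) and that $H(\sigma_p)=\sum_i H(\sigma^{(i)}_p)\le qS$.

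Next I would verify the two clauses of Definition~\ref{def-approx-e}. For an authorized set $A\in\Gamma$, joint independence of the $q$ blocks lets one split $H(\varkappa|\sigma_A)=\sum_{i=1}^q H(\varkappa^{(i)}|\sigma^{(i)}_A)$, and every summand vanishes because $\Sigma$ already reconstructs the secret perfectly; hence there is no missing information. For a forbidden set $B\notin\Gamma$ the same splitting gives $I(\varkappa:\sigma_B)=\sum_{i=1}^q I(\varkappa^{(i)}:\sigma^{(i)}_B)\le q\varepsilon N=\varepsilon\,H(\varkappa)\le q\varepsilon\,H(\varkappa)$, which is within the required threshold. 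So the serialized scheme is a $\NPS{\Gamma}{qN}{qS}{q\varepsilon}$; in fact this argument preserves the leak ratio rather than multiplying it by $q$, so one even gets a $\NPS{\Gamma}{qN}{qS}{\varepsilon}$.

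There is no genuine obstacle in this proof; the one point that needs attention is that splitting $H(\varkappa|\sigma_A)$ and $I(\varkappa:\sigma_B)$ into a sum over the $q$ copies uses mutual (not merely pairwise) independence of the blocks, which is exactly why the construction is set up on the product space from the outset. One should also bear in mind the convention $H(\varkappa)=N$ underlying the phrase ``$N$-bit secret'', so that the asserted scaling of the parameters is literally the statement about ratios; the degenerate case $q=0$ yields the trivial constant scheme and holds vacuously.
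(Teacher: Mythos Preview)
Your proposal is correct and follows exactly the paper's approach: concatenate $q$ independent copies of $\Sigma$, just as in Proposition~\ref{prop1}(b). Your observation that the leak \emph{ratio} is actually preserved (so one in fact obtains a $\NPS{\Gamma}{qN}{qS}{\varepsilon}$, stronger than the stated $q\varepsilon$) is correct and worth noting; the paper's one-line proof is simply less careful about this point.
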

\begin{proof}
Simply reuse the construction of (b) of proposition~\ref{prop1}. Then a
forbidden group can have at most $q\varepsilon$ bits of information about the
secret.
\end{proof}

Scaling down of the size of the secret becomes non-trivial for non-perfect
secret sharing schemes if we want to keep the same information leak and missing
information. If we can $\varepsilon$-nonperfectly share an $N$-bit secret, then
intuitively it seems that we should be able to share one single bit with
information leak ratio of about $\varepsilon$. However this statement is quite
non-obvious. Now we formulate and prove a slightly weaker statement (it is the
most technical result of this paper):

\begin{theo}
\label{maintheo}
For all $c\in(0,\frac14)$ there exists an integer $N_0>0$ such that for every
access structure $\Gamma$ on $n$ participants. 
If for some $\varepsilon$
there exist a
$\NPS{\Gamma}{N}{S}{\varepsilon}$ where the secret is uniformly distributed,
such that
\begin{itemize}
\item $nS < 2^{cN}$
\item $N > N_0$
\end{itemize}
there exists a  
$\NPS{\Gamma}{1}{S}{\varepsilon'}$ with 
$\varepsilon'=8\varepsilon^{2\over 3}$, where the secret is uniformly distributed
\end{theo}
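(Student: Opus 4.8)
The plan is to obtain the $1$-bit scheme by keeping the shares $\sigma_1,\dots,\sigma_n$ unchanged and declaring the new secret to be one suitably chosen bit $b=f(\varkappa)$ of the old secret. This keeps the per-share entropy $\le S$, and because $H(\varkappa\mid\sigma_A)=0$ for authorized $A$ it forces $H(b\mid\sigma_A)=0$ for free; the whole problem is to pick $f$ balanced (so $b$ is uniform) with $I(b:\sigma_B)\le 8\varepsilon^{2/3}$ \emph{for every} forbidden $B$ at once. A mutual information with a bit is always $\le 1$, so if $8\varepsilon^{2/3}\ge 1$ we may take any perfect $1$-bit scheme; assume $\varepsilon$ small. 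Also, since the secret is uniform its bits are independent, so for each forbidden $B$, $\sum_i I(\varkappa_i:\sigma_B)\le I(\varkappa:\sigma_B)\le\varepsilon N$; in particular, when $\varepsilon$ is so small that $\varepsilon N\le 8\varepsilon^{2/3}$, the fixed bit $f(\varkappa)=\varkappa_1$ already works for all $B$, so I may assume $\varepsilon^{1/3}N$ is at least an absolute constant.

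For the main case I would take $f=\langle a,\cdot\rangle$ for a well-chosen $a\in\{0,1\}^N\setminus\{0\}$. Write $p_s$ for the conditional law of $\varkappa$ given $\sigma_B=s$; then $I(\langle a,\varkappa\rangle:\sigma_B)=\expect_s\bigl[1-H(\langle a,\varkappa\rangle\mid\sigma_B=s)\bigr]$. Two tools: \textbf{(i) Parseval:} for any law $p$ of a string $X$, $\expect_a\bigl[(2\Pr[\langle a,X\rangle=0]-1)^2\bigr]=\sum_x p(x)^2$, and with $1-H(q)\le(2q-1)^2$ this gives $\expect_a\bigl[1-H(\langle a,X\rangle)\bigr]\le\sum_x p(x)^2$, the collision probability of $p$. \textbf{(ii) Flattening:} the bound $\sum_i I(\varkappa_i:\sigma_B)\le\varepsilon N$ reads $\expect_s[\,N-H(p_s)\,]\le\varepsilon N$, so by Markov at most an $\varepsilon^{1/3}$-fraction of the $s$ have entropy deficit above $\varepsilon^{2/3}N$; and if from each $p_s$ one deletes the atoms of mass $>2^{-(1-\varepsilon^{1/3})N}$, the total deleted mass has expectation $O(\varepsilon^{2/3})$ (an atom of mass $\mu$ costs $\mu\,\varepsilon^{1/3}N$ of deficit) and the remainder has min-entropy $\ge(1-\varepsilon^{1/3})N$, hence collision probability $\le 2^{-(1-\varepsilon^{1/3})N}$. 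Combining (i) and (ii), for each forbidden $B$ one gets $\expect_a I(\langle a,\varkappa\rangle:\sigma_B)\le O(\varepsilon^{2/3})+2^{-\Omega(N)}$, where the $O(\varepsilon^{2/3})$ part (deleted heavy atoms, plus conditionings $s$ of negligible probability, controlled via $H(\sigma_B)\le nS$) is \emph{uniform} in $a$ and only the $2^{-\Omega(N)}$ part genuinely averages over $a$.

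It remains to exhibit one $a$ good for all forbidden $B$ simultaneously. For this I would run a first/second-moment argument over $a$ on the genuinely $a$-dependent (high-min-entropy) part, whose $a$-average is $2^{-\Omega(N)}$ per $B$; since there are $\le 2^n$ forbidden sets, a good $a$ exists once $2^{-\Omega(N)}$ beats the accumulated losses, and this is exactly where the hypotheses $N>N_0$ and $nS<2^{cN}$ with $c<\tfrac14$ enter: they keep $N$ large relative to $\log(1/\varepsilon)$ and to the scheme parameters, and $c<\tfrac14$ makes quantities like $(nS)^{4}$ stay below $2^{N}$, so that the factor $2^{n}$, the number $\le 2^{O(nS/\varepsilon^{2/3})}$ of non-negligible values of each $\sigma_B$, and the various polynomial losses are absorbed by the $2^{-\Omega(N)}$ and $\varepsilon^{2/3}$ budgets. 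For such an $a$, the tuple $(\langle a,\varkappa\rangle,\sigma_1,\dots,\sigma_n)$ is a $\NPS{\Gamma}{1}{S}{8\varepsilon^{2/3}}$ with uniform secret.

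The hard part is precisely this universal quantifier over forbidden sets: a single deterministic extractor $\langle a,\cdot\rangle$ must fool exponentially many conditional laws of the secret about which we control only an \emph{averaged Shannon} deficit, and an averaged Shannon deficit does not by itself permit extraction (a law can be almost uniform yet carry a constant-mass atom). The flattening step is what turns this, at the cost of the cube-root loss, into a genuine min-entropy guarantee whose per-$B$ error decays like $2^{-\Omega(N)}$; the remaining — and only truly computational — task is to balance the deleted-mass terms, the per-$B$ errors, and the $nS$-dependent count of relevant share values against the budget $8\varepsilon^{2/3}$ under $c<\tfrac14$ and $N>N_0$, which is where the constants $8$, $\tfrac23$, $\tfrac14$ get pinned down.
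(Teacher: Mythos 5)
Your reductions to the case $\varepsilon^{1/3}N\ge 8$, the heavy-atom accounting (an atom of mass $\mu$ above the threshold costs about $\mu\,\varepsilon^{1/3}N$ of entropy deficit), and the quadratic estimate $1-H(\tfrac12+t)\le 4t^2$ all track the paper's own mechanism and are indeed what produces the exponent $\tfrac23$. The gap is exactly at the step you yourself call the crux: a single cut function good for \emph{all} forbidden groups. Your pool of candidates consists of the $2^N-1$ linear forms $\langle a,\cdot\rangle$, whereas the hypothesis $nS<2^{cN}$ bounds $nS$, not $n$ by anything like $O(N)$: with $S\ge 1$ (as the paper also assumes) it allows $n$ up to nearly $2^{cN}$, hence up to about $2^{2^{cN}}$ forbidden sets, and already $n$ moderately larger than $N$ is both allowed and fatal to your count. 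A first- or second-moment bound over $a$ only shows that for each forbidden $B$ the fraction of bad $a$ is at most roughly $2^{-(1-\varepsilon^{1/3})N}/\varepsilon^{2/3}$; over a sample space of size $2^N-1$ no nonzero failure probability can be below $2^{-N}$, so a union bound over the forbidden sets becomes impossible as soon as their number exceeds about $2^{(1-\varepsilon^{1/3})N}$, which the hypotheses permit by an enormous margin. Nothing in $c<\tfrac14$ or $N>N_0$ repairs this, because those conditions never force $n=O(N)$.

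The paper avoids precisely this obstacle by drawing the new secret bit from a much larger family: a uniformly random balanced partition $\Ks=\Ks_0\sqcup\Ks_1$ of the $2^N$ secrets (a sample space of size $\binom{2^N}{2^{N-1}}$), and by applying Hoeffding's inequality for sampling without replacement, for every forbidden $B$ and every typical share value $b$, to get per-event failure probabilities of order $\exp\bigl(-\varepsilon^{2/3}\,2^{N/2-2\varepsilon^{-1/3}}\bigr)\le\exp\bigl(-\varepsilon^{2/3}\,2^{N/4}\bigr)$. These doubly exponentially small tails are what survive the union bound over $|\overline{\Gamma}|\cdot|\Ss_\Ps|\le 2^{n+nS}\le 2^{2\cdot 2^{cN}}$ events; this is where the constant $\tfrac14$ really enters (via $\varepsilon^{-1/3}\le N/8$), not through keeping $(nS)^4$ below $2^N$. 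Any repair of your argument has to replace the linear family by essentially this partition family together with an exponential concentration bound, i.e., it becomes the paper's proof; Parseval controls only averages over $a$ and cannot yield tails of the required size. Two smaller points: in the case $8\varepsilon^{2/3}\ge 1$ you cannot invoke ``any perfect $1$-bit scheme'' (its shares need not have entropy at most $S$), but keeping the given shares and any fixed balanced bit works since $I(\xi:\sigma_B)\le H(\xi)=1\le\varepsilon'$; and if your Markov split at deficit $\varepsilon^{2/3}N$ is actually used, the atypical conditionings carry mass up to $\varepsilon^{1/3}$, which alone exceeds the budget $8\varepsilon^{2/3}$ once $\varepsilon<8^{-3}$ --- either drop that split (your per-atom accounting covers all $s$) or place the threshold at $\Theta(\varepsilon^{1/3})N$ as the paper does, so that the atypical mass is $\Theta(\varepsilon^{2/3})$.
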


\emph{Sketch of the proof:} Construct a new scheme for a $1$-bit secret from
the initial scheme in the following way. Given a
$\NPS{\Gamma}{N}{S}{\varepsilon}$ for a uniformly distributed secret in
$\Ks=\{1,\ldots,2^N\}$, take a splitting of $\Ks$ into two equal parts, say
$\Ks_0$ and $\Ks_1$. Then define a new scheme as follows: to share the bit
$i$, take a random element of $\Ks_i$ and share it with the initial scheme. It
is easy to see that this new scheme is indeed a
$\NPS{\Gamma}{1}{S}{\varepsilon'}$ for a uniformly distributed secret bit with
some leak $\varepsilon'$. This leak $\varepsilon'$ depends on the initial
choice of the splitting $\Ks_0$.  We will show that there exists one such
splitting for which the leak is small.

\medskip

We first prove a general lemma about discrete random variables. 

\begin{lem} 
Let $X$ be a finite discrete random variable over a $k$-element set
$A$ (with $k$ even) such that $H(X)\ge\log{k}-\delta$ for some positive
$\delta$. Let $B$ be a random subset of $A$ of size $k/2$ \textup($B$ is
chosen uniformly, i.e., each $(k/2)$-element subset of $A$ is chosen with
probability $1/ {{k}\choose{k/2}}$\textup). 
Then  for every $\gamma\in(0,1)$,  with probability at least 
 $$1-2e^{-\frac{4\tau^2}{k\gamma^2}}$$
 \textup(probability for a random choice of $B$\textup)
we have 
 $$\|\Pr[X\in B] - \frac12\| \le 2\tau$$
 \textup(probability for the initial distribution $X$\textup), where
$\tau=\frac{1+\delta}{2\log{\gamma k }}$.
\label{mainlemma}  
\end{lem}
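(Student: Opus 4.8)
The plan is to view $\Pr[X \in B] = \sum_{a \in B} p_a$ as a function of the random subset $B$ and to show it concentrates around its mean $1/2$. First I would note that, by symmetry of the uniform choice of $B$ among all $(k/2)$-subsets, we have $\expect[\Pr[X \in B]] = \sum_a p_a \cdot \Pr[a \in B] = \sum_a p_a \cdot \tfrac12 = \tfrac12$. So the content is entirely a deviation estimate. The natural tool is a concentration inequality for sampling without replacement — either McDiarmid's bounded-differences inequality applied to the process of choosing $B$, or (cleaner) the Hoeffding--Serfling bound for sums of a sampled-without-replacement subpopulation. Swapping one element in or out of $B$ changes $\Pr[X\in B]$ by at most $\max_a p_a$, which suggests a bound of the shape $2\exp(-c \cdot (\text{deviation})^2 / (k \max_a p_a^2))$.

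The second ingredient is to convert the entropy hypothesis $H(X) \ge \log k - \delta$ into a bound on $\max_a p_a$. Intuitively, if $X$ is close to uniform, no single atom can be very heavy. Quantitatively, I would argue: if some $p_{a_0} = t$, then splitting off that atom gives $H(X) = h(t) + (1-t) H(X \mid X \neq a_0) \le h(t) + (1-t)\log(k-1) \le h(t) + (1-t)\log k$, where $h(\cdot)$ is binary entropy. Combined with $H(X) \ge \log k - \delta$ this forces $t \log k \le \delta + h(t) \le \delta + 1$, i.e. $\max_a p_a \le \frac{1+\delta}{\log k}$. (The statement's $\tau = \frac{1+\delta}{2\log(\gamma k)}$ has $\log(\gamma k)$ rather than $\log k$ and an extra factor $\tfrac12$, which should come out of writing $\log(k-1)$ more carefully or of a slightly different bookkeeping — I would reconcile the precise constant at the end; the shape is what matters.) Plugging $\max_a p_a \le 2\tau$ (up to the constant discrepancy) into the concentration bound with deviation $2\tau$, and tracking where the free parameter $\gamma$ enters, should yield exactly the claimed probability $1 - 2e^{-4\tau^2/(k\gamma^2)}$.

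So the two key steps, in order, are: (1) establish $\expect[\Pr[X\in B]] = \tfrac12$ and apply a bounded-differences / sampling-without-replacement concentration inequality with per-step sensitivity $\max_a p_a$; (2) use the near-maximal entropy hypothesis to bound $\max_a p_a$ by (a constant times) $\tau$, and substitute. The main obstacle I expect is not conceptual but bookkeeping: getting the constants and the $\log(\gamma k)$ versus $\log k$ exactly as stated, since the concentration inequality for sampling without replacement comes in several slightly different normalizations, and one has to be careful that the ``bad'' atoms (if $X$ is far from uniform the above $\max p_a$ bound is the only handle) don't spoil the variance term. A secondary subtlety is that McDiarmid's inequality does not apply verbatim to subset sampling (the coordinates aren't independent); I would either cite the Hoeffding--Serfling inequality for samples drawn without replacement, or set it up as a martingale revealing the elements of $B$ one at a time and check the increments are bounded by $\max_a p_a$, which is routine.
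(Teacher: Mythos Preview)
Your plan has the right two ingredients --- a Hoeffding-type bound for sampling without replacement, and an entropy argument controlling how heavy the atoms of $X$ can be --- but you have misidentified the role of $\gamma$, and this is a genuine gap, not a bookkeeping issue to be reconciled at the end.

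In your outline the sensitivity in the concentration bound is $\max_a p_a$, which you then bound by $\frac{1+\delta}{\log k}$ via the entropy hypothesis. That gives a tail bound of the shape $2\exp\bigl(-4t^2/(k\,(\max_a p_a)^2)\bigr)$. There is no place in this argument where a free parameter $\gamma$ enters; at best you have proved the lemma for the single value $\gamma = \max_a p_a$, and the $\log(\gamma k)$ in $\tau$ will not appear. The paper's $\gamma$ is not a constant left over from writing $\log(k-1)$ instead of $\log k$; it is a \emph{truncation threshold} that you get to choose freely, and in the application (Theorem~\ref{maintheo}) it is taken exponentially small in $N$, far below $\max_a p_a$. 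With your bound the exponent would be $4t^2/(k\,(\max_a p_a)^2)$, and since $\max_a p_a$ can only be pushed down to order $(1+\delta)/\log k$, the resulting failure probability is useless for that application.

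What the paper actually does is: fix $\gamma\in(0,1)$, set aside the ``heavy'' atoms with $\rho_x>\gamma$, and use the entropy hypothesis not to bound $\max_a p_a$ but to bound the \emph{total mass} $p_\gamma=\sum_{\rho_x>\gamma}\rho_x$ of heavy atoms, via a short coding argument yielding $p_\gamma\le\frac{1+\delta}{\log(\gamma k)}=2\tau$. Then Hoeffding for sampling without replacement is applied only to the light atoms, each of which lies in $[0,\gamma]$ --- this is where the $\gamma^2$ in the denominator of the exponent comes from. The heavy atoms are handled deterministically: they shift the sum by at most $p_\gamma\le 2\tau$ in the worst case. Combining the two pieces gives the deviation $2\tau$ with the stated failure probability. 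So the missing idea in your plan is precisely this split into heavy and light atoms at level $\gamma$.
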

\noindent
\emph{(In applications of this lemma we will choose the most reasonable values
of parameter $\gamma$.)}

\begin{proof} For each element $x\in A$, denote by $\rho_x$ the non-negative
weight (probability) that $X$ assigns to $x$. Using this notation we have $$
H(X) = \sum_{x\in A}{-\rho_x\log{\rho_x}}$$ A randomly chosen $B$ contains
exactly one half of the points $x$ from $A$. We need to estimate the sum of
$\rho_x$ for all $x\in B$. We do it separately for ``rather large'' $\rho_x$
and for ``rather small'' $\rho_x$. To make this idea more precise, fix a
threshold $\gamma>0$ that separates ``rather large'' and ``rather small''
values of $\rho_x$. Denote by $p_\gamma$ the total measure of all $\rho_x$
that are greater than this threshold. More formally,

  $$p_\gamma  = \sum\limits_{\rho_x>\gamma} \rho_x$$ 
We claim that $p_\gamma$ is rather small. Indeed, if we need to identify some
$x\in A$, we should specify the following information which consists of two
parts:
\begin{itemize}
 \item[1.] We say whether $p_x>\gamma$ or not (one bit of information).
 \item[2a.] If  $p_x>\gamma$, we specify the ordinal number of this ``large''
 point; there are at most $1/\gamma$ points $x'$ such that $\rho_{x'}>\gamma$,
 so we need at most $\log (1/\gamma)$ bits of information;
 \item[2b.] otherwise, $p_x\le \gamma$, we simply specify the ordinal number of
 $x$ in $A$; here we need at most $\log k$ bits of information.
\end{itemize}
From the standard coding argument we get
 $$H(X) \le 1 + p_\gamma  \log(1/\gamma) + (1-p_\gamma) \log k$$
Since $H(X)\ge \log k-\delta$, it follows that 
$
 p_\gamma \le \frac{1+\delta}{\log (\gamma k)}.
$

Thus, we may assume that total measure of ``rather large'' values $\rho_x$ is
quite small even in the entire set $A$; hence, ``large'' points do not affect
seriously the measure of a randomly chosen $B$. It remains to estimate the
typical impact of ``small'' $\rho_x$ to the weight of $B$.

Technically, it is useful to forget about ``large'' points $x$ (substitute
weights $\rho_x>\gamma$ by $0$) and denote
$$
\rho'_x = \left\{
					\begin{array}{ccl}
					\rho_x &\mbox{if } \rho_x\le \gamma\\
					0 &\mbox{otherwise}
					\end{array}
			   \right.
$$
Now we choose exactly $k/2$ different elements from $A$ and estimate the sum of
the corresponding $\rho'_x$. Note that expectation of this sum is one half of
the sum of  $\rho'_x$ for all $x\in A$, i.e, $(1-p_\gamma)/2$. It remains to
estimate the deviation of this sum from its expectation. We use the version of
Hoeffding's bound for samplings without replacement, which can be used to
estimate deviations for a sampling of $k/2$ points from a $k$-elements set,
(\cite{heoffdingineq}[section~6]). The probability of the event that the sum
exceeds expected value plus some $\tau$ can be bounded as follows:
$$
\Pr[\sum\limits_{x\in B} \rho'_x \ge(1-p_\gamma)/2 + \tau]\le e^{-\frac{2\tau^2
}{|B|\gamma^2}}= e^{-\frac{4\tau^2}{k\gamma^2}}
$$
Together with ``large'' values $\rho_x$ we have 
$$
\Pr[\sum\limits_{x\in B} \rho_x \ge(1-p_\gamma)/2 + \tau + p_\gamma]\le
e^{-\frac{4\tau^2}{k\gamma^2}}
$$
Now we fix the parameter $\tau$ to be equal to one half of the upper bound for
$p_\gamma$, i.e., $\tau = \frac{1+\delta}{2\log (\gamma k)}$. It follows that, 
$$
\Pr[\sum\limits_{x\in B} \rho_x \ge 1/2 +2\tau   ]\le 
e^{-\frac{4\tau^2}{k\gamma^2}}
$$
From this bound, we can deduce the symmetric bound for the sum of $\rho_x$ in
$A\setminus B$:
$$
\Pr[\sum\limits_{x\in A\setminus B} \rho_x \le 1/2 -2\tau ]\le 
e^{-\frac{4\tau^2}{k\gamma^2}}
$$
Since $A\setminus B$ and $B$ share the same distribution (the uniform one),
this bound also holds for $B$. Sum up the two bounds and we are done.
\end{proof}

We are now ready to prove Theorem \ref{maintheo}.

\begin{proof} (of Theorem \ref{maintheo}).
Let $\Ks_0$ be a random subset of the set of all secrets $\Ks$ such that
$|\Ks_0|=2^{N-1}$. $\Ks_0$ is chosen uniformly over all possible such fair
splittings of $\Ks$. If $\varkappa$ be the random variable for the $N$-bit
secret in the initial scheme, let us define the new secret bit $\xi$ as the bit
defined by "$\varkappa\in \Ks_0$" ($\xi$ is indeed a bit since $H(\xi) = 1$).
Our goal is to estimate $H(\xi|\sigma_B)$ for any $B\notin\Gamma$ be a
forbidden group, and show it is large.  Formally, we want to show that
$H(\xi|\sigma_B) \ge 1 - \varepsilon'$ where $\varepsilon' =
8\varepsilon^{2\over 3}$. 
 
First, we notice that for any bit $\xi$ constructed as above,
$I(\xi:\sigma_B)\le\varepsilon$ holds for all $B\notin\Gamma$, so we can assume
that $\varepsilon'\le\varepsilon$, i.e,
\begin{equation}
\label{epspgt}
\varepsilon' \ge \frac{8^3}{N^2}
\end{equation}
We know that  $H(\varkappa | \sigma_B)$ is rather large. More precisely,
$$
 H(\varkappa | \sigma_B) \ge N(1 - \varepsilon) 
$$ 
We introduce some positive parameter $\delta$ (to be fixed later) to
separate all values of $\sigma_B$ into two classes:
$$
\mbox{more typical values }b\mbox{ such that } H(\varkappa|\sigma_B=b) \ge
N(1-\delta) 
$$
and 
$$
\mbox{less typical values }b\mbox{ such that }H(\varkappa|\sigma_B=b)< N(1-\delta) 
$$

Since the entropy $H(\varkappa | \sigma_B)$ is large, the total measure of all
``less typical'' values $b$ is rather small (more precisely, it is not greater
than $\frac{\varepsilon}{\delta}$). We do not care about the conditional
entropy of $\xi$ when $b$ is non-typical (the total weight of these $b$ is so
small that they do not contribute essentially to $H(\xi|\sigma_B)$). We focus
on the contribution of $H(\xi|\sigma_B=b)$ for a typical value $b$. To estimate
this quantity we apply lemma \ref{mainlemma} to the distribution $k$
conditional to $\sigma_B=b$, it follows that 
$$
H(\xi|\sigma_B=b) \ge h(1/2 + 2\tau) \ge 1 - 16\tau^2
\mbox{ with probability } 1-2e^{-\frac{4\tau^2}{\gamma^2}2^{-N}}
$$
for some new parameter $\gamma >0$ and $\tau
=\frac{1+\delta N}{2(\log{\gamma}+N)}$.

This inequality true for all forbidden group $B$ and any typical share $b$.
Thus if we sum up the bad events, we obtain that the following estimation for
$H(\xi|\sigma_B)$:
\begin{eqnarray*}
H(\xi|\sigma_B) & = & \sum_{b\in\Ss_B}{\Pr[\sigma_B=b]H(\xi|\sigma_B=b)} \\
         &\ge& \sum_{\text{typical } b}{\Pr[\sigma_B=b]H(\xi|\sigma_B=b)}\\ 
         &\ge& (1-{\varepsilon\over\delta})(1-16\tau^2)  \\
         &\ge&1-{\varepsilon\over\delta} - 16\tau^2 \\      
\end{eqnarray*}
holds with probability at least 
\begin{equation}
\label{eqproba}
1-|\overline{\Gamma}||\Ss_\Ps|2e^{-\frac{4\tau^2}{\gamma^2}2^{-N}}
\end{equation}
where $\Ss_\Ps$ is the set of all possible shares given to the group of all
participants.

Now, we choose our parameters $\gamma$ and $\delta$ to deduce our result
and show that our choice is valid. We take 
\begin{eqnarray}
\label{paramtau}
16\tau^2 =
{\varepsilon\over\delta} =
 \frac12\varepsilon'
= 4\varepsilon^{2\over 3} 
\end{eqnarray}
Under these conditions 
\begin{eqnarray}\label{paramgamma}
\log\gamma=-N\left[1-\frac18\left(\frac{\varepsilon'}{\varepsilon N}+ 2\right)\right]
\end{eqnarray}
and
$$
H(\xi|\mathbf{B})\ge 1-8\varepsilon^{2\over 3} = 1-\varepsilon'
$$

We want to find a simple sufficient condition that guarantees  that the
probability (\ref{eqproba}) is non-negative. To this end we do some (rather
boring) calculations. We take the required inequality and reduce it step by
step to a weaker but more suitable form:
$$
\begin{array}{rcl|l}
|\overline{\Gamma}||S_\Ps|2e^{-\frac{4\tau^2}{\gamma^2}2^{-N}} &<& 1  
& \text{ that is what we need, see }(\ref{eqproba}) \\
|\overline{\Gamma}| |S_\Ps|  &<& 2e^{\frac{4\tau^2}{\gamma^2}2^{-N}} \\
                   2^n2^{nS} &<& 2e^{\frac{4\tau^2}{\gamma^2}2^{-N}} 
& \text{trivial upper bounds for } |\overline{\Gamma}| \text{ and } |S_\Ps|\\
                  2^{n(S+1)} &<& 2^{\frac{4\tau^2}{\gamma^2}2^{-N}} 
& \text{since } e > 2 \\
              n(S+1)  &<& \frac{4\tau^2}{\gamma^2}2^{-N} 
& \text{by applying }\log\\
                  2nS &<& \frac{4\tau^2}{\gamma^2}2^{-N} 
& \text{since } S \ge 1 \\
                  2nS &<& \frac{\varepsilon'}{8}
2^{N(1-\frac14(\frac{\varepsilon'}{\varepsilon N}+ 2))}
& \text{from } (\ref{paramtau}) \text{ and } (\ref{paramgamma}) \\
                   nS &<& \varepsilon'2^{\frac14N-4}
& \text{since } \varepsilon'\le\varepsilon N  \\
               2^{cN} &<& \varepsilon'2^{\frac14N-4}
& \text{by assumption} \\
                    1 &<& \varepsilon'2^{(\frac14-c)N-4}\\
                    0 &<& (\frac14-c)N+\log{\varepsilon'}-4\\
                    0 &<& (\frac14-c)N-2\log{N}+5
&\text{from } (\ref{epspgt})  
\end{array} 
$$
The last inequality (which is a sufficient condition for (\ref{eqproba}) to be
non-negative) holds when $c<\frac14$ and $N>N_0$ for some large enough $N_0$
depending on $c$. 
\end{proof}

Notice that in this case we consider schemes where the secret is uniformly
distributed since the dependency on the probability distribution of the secret
is not trivial in the nonperfect case. Sharing exactly one bit instead of $N$
seems more difficult. We do not know whether this bound can be improved, in
particular, can we achieve a leak of $O(\varepsilon)$ ? The assumption $nS =
O(2^N)$ points out that the result holds for various kind of access structures
defined by some trade-off between the number of participants $n$ and the size
of the shares $S$ of a scheme for $N$-bit secrets.

\section{Conclusion}

In this article we introduced several definitions of almost-perfect secret
sharing schemes (two versions in the framework of Shannon's entropy and another
version in the framework of Kolmogorov complexity). We proved that all these
approaches are asymptotically equivalent (have equivalent asymptotical rates of
schemes for each access structure). This means that we can combine tools of
Shannon's information theory and Kolmogorov complexity to investigate the
properties of approximately-perfect secret sharing.

The major questions remain open. The most important one is to understand: can
almost perfect secret sharing schemes achieve substantially
better information rates than perfect (in classic sense) secret sharing schemes?
The known proofs of lower bounds for the rate of perfect secret sharing
schemes are based on combinations of information inequalities; so it is not
hard to check that the same type of arguments imply the same kind of bounds for
almost perfect schemes. Thus, the problem of separating the information rates
for \emph{almost-perfect} and exactly \emph{perfect} schemes looks rather hard.

\section*{Acknowledgment}

The author would like to thank Andrei Romashchenko and Sasha Shen for
stimulating discussions, and anonymous reviewers who helped substantially
improve the manuscript. This work is partially supported by EMC
ANR-09-BLAN-0164-01 and NAFIT ANR-08-EMER-008-01 grants. 

\bibliography{../biblio}

\end{document}